\theoremstyle{definition}
\newtheorem{example}{Example}
\theoremstyle{definition}
\newtheorem{definition}{Definition}
\theoremstyle{plain}
\newtheorem{lemma}{Lemma}
\newcommand{\rk}{\operatorname{rank}}
\renewcommand{\dim}{\operatorname{dim}}
\newcommand{\F}{\mathbb{F}_2}
\newcommand{\spn}{\operatorname{span}}
\newcommand{\ket}[1]{{\left\vert{#1}\right\rangle}}
\newcommand{\qw}[1][-1]{\ar @{-} [0,#1]}
\newcommand{\qwx}[1][-1]{\ar @{-} [#1,0]}
\newcommand{\gate}[1]{*+<.6em>{#1} \POS ="i","i"+UR;"i"+UL **\dir{-};"i"+DL **\dir{-};"i"+DR **\dir{-};"i"+UR **\dir{-},"i" \qw}
\newcommand{\control}{*!<0em,.025em>-=-<.2em>{\bullet}}
\newcommand{\ctrl}[1]{\control \qwx[#1] \qw}
\newcommand{\targ}{*+<.02em,.02em>{\xy ="i","i"-<.39em,0em>;"i"+<.39em,0em> **\dir{-}, "i"-<0em,.39em>;"i"+<0em,.39em> **\dir{-},"i"*\xycircle<.4em>{} \endxy} \qw}
\newcommand{\push}[1]{*{#1}}
\newcommand{\rstick}[1]{*!L!<-.5em,0em>=<0em>{#1}}
\newcommand{\lstick}[1]{*!R!<.5em,0em>=<0em>{#1}}
\newcommand{\ustick}[1]{*!D!<0em,-.5em>=<0em>{#1}}
\newcommand{\Qcircuit}{\xymatrix @*=<0em>}
\title{Polynomial-time T-depth Optimization of Clifford+T circuits via Matroid Partitioning}
\author{
{Matthew Amy,$^{1}$ Dmitri Maslov,$^2$\;\;and Michele Mosca$^{3,4}$} \\
{\small\it $^1$ Department of Computer Science} \\
{\small\it University of Toronto, Toronto, Ontario, Canada} \\
{\small\it $^2$ National Science Foundation} \\
{\small\it Arlington, Virginia, USA} \\
{\small\it $^3$  Institute for Quantum Computing, and Dept. of Combinatorics \& Optimization} \\
{\small\it University of Waterloo,  Waterloo, Ontario, Canada} \\
{\small\it $^4$ Perimeter Insitute for Theoretical Physics} \\
{\small\it Waterloo, Ontario, Canada} \\
}
\begin{document}

\maketitle

\begin{abstract}
Most work in quantum circuit optimization has been performed in isolation from the results of quantum fault-tolerance. Here we present a polynomial-time algorithm for optimizing quantum circuits that takes the actual implementation of fault-tolerant logical gates into consideration. Our algorithm re-synthesizes quantum circuits composed of Clifford group and $T$ gates, the latter being typically the most costly gate in fault-tolerant models, e.g., those based on the Steane or surface codes, with the purpose of minimizing both $T$-count and $T$-depth. A major feature of the algorithm is the ability to re-synthesize circuits with additional ancillae to reduce $T$-depth at effectively no cost. The tested benchmarks show up to 65.7\% reduction in $T$-count and up to 87.6\% reduction in $T$-depth without ancillae, or 99.7\% reduction in $T$-depth using ancillae.
\end{abstract}

\section{Introduction}

Quantum computers have the potential to efficiently solve important computational problems, including integer factorization \cite{s94} and quantum simulation \cite{l96}, for which there are no known efficient classical algorithms. However, even with recent advances in quantum information processing technologies \cite{bskjfubb12, bwcomklw11, cgcmsrpkrrks12, rgppccsmrkrks12}, the prospects of scalable quantum computing without some systematic way of mitigating physical errors and noise are bleak.

The active and vibrant fields of quantum error correction and fault-tolerance provide such tools for constructing scalable quantum computers. By combining physical qubits through the use of error correcting codes and providing fault-tolerant logical operations, larger computations can be achieved with high fidelity -- by concatenating codes, or in topological codes by increasing code distance -- provided the physical operations achieve a certain threshold fidelity. With recent improvements to fault-tolerant thresholds \cite{baokm12, fsg09, fwh12}, scalable quantum computation is becoming more and more viable, resulting in a growing need for efficient automated design tools targeting fault-tolerant quantum computers.

Quantum circuit synthesis and optimization is particularly important, given the prevalence of the circuit model of quantum computation, but previous work has been largely isolated from the unique concerns of fault-tolerance. While at the physical level, coupled gates are generally the hardest to perform, most of the common quantum error-correcting codes have efficient $CNOT$ implementations. Moreover, for fault-tolerant models based on (double even, self-dual) CSS codes, e.g., the popular Steane code, as well as the promising surface codes, the Clifford group can be implemented as logical gates with little cost \cite{fsg09, zlc00}.

For universal quantum computing, however, at least one non-Clifford group gate is needed, which typically requires large ancilla factories and gate teleportation to implement fault-tolerantly. As the non-Clifford $T$ gate has known constructions in most of the common error correction schemes, the standard universal fault-tolerant gate set is taken to be ``Clifford + $T$". Given the high cost of the fault tolerant implementations of the $T$ gate \cite{agp06, fsg09}, exceeding the cost of Clifford group gates by as much as a factor of a hundred or more, it has recently been proposed that efficient circuits should minimize the number of $T$ gates, and more specifically the number of $T$ gates that cannot be performed in parallel \cite{ammr13, f12} -- we define these metrics as a circuit's $T$-count and $T$-depth, respectively. Indeed, Fowler \cite{f12} shows how to perform fault-tolerant computations in time proportional to one round of measurement per layer of $T$ gates, and as a result the $T$-depth directly determines a circuit's runtime. Likewise, reducing the number of $T$ gates reduces the number of ancilla states that require preparation, vastly reducing circuit volume and at the same time increasing the fidelity of the computation. While the primary purpose of our work is to optimize $T$-depth (circuit runtime), our algorithm also provides significant reductions to $T$-count (circuit volume).

Some recent work has been done concerning minimization of $T$-depth \cite{ammr13, s13}, though these previous results focus on finding small optimal two- and three-qubit circuits \cite{ammr13}, and on classes of circuits that can be parallelized to $T$-depth 1 by adding ancillae \cite{ammr13, s13}. By contrast, we report a scalable automated tool for the optimization of $T$-depth that functions with or without ancillae, and is not limited to a few qubits or a specific class of circuits. In particular, we present a polynomial-time algorithm for optimizing both the $T$-depth and $T$-count of quantum circuits composed of Clifford group and $T$ gates. The algorithm also makes automatic use of ancillae to optimize $T$-depth, with the addition of ancillae typically decreasing the runtime of our software implementation. Our experiments show on average 61.1\% reduction in $T$-depth and 39.9\% reduction in $T$-count without adding any ancillae, using the available benchmarks.  When the use of ancillae is allowed, the average $T$-depth reduction is demonstrated to be as high as 80.7\% (the more ancillae are allowed the more parallelization becomes possible, in some cases reducing $T$-depth by as much as 99.7\%).

The rest of this paper is structured as follows: Section \ref{sec:prelim} reviews some background on quantum and reversible computation, and introduces the notations we will use. Sections \ref{sec:cnott} and \ref{sec:mat} describe the algorithmic core -- a procedure that optimally parallelizes the $T$ gates in a circuit composed of $CNOT$ and $T$ gates by performing matroid partitioning. Section \ref{sec:uni} develops a heuristic extending the optimal $\{CNOT, T\}$ core to a universal gate set, while Section \ref{sec:tpar} describes the final algorithm. Section \ref{sec:res} reports our experimental results, and Section \ref{sec:con} concludes the paper.

\section{Preliminaries}
\label{sec:prelim}

We begin by reviewing some basic facts about quantum and reversible circuits necessary for this paper. 

In the classical circuit model, the state of a system of $n$ bits is represented as a binary string of length $n$, with classical gates corresponding to operators that map length-$n$ binary strings to length-$m$ binary strings. More precisely, length-$n$ binary strings are vectors of $\mathbb{F}_2^n$, where $\mathbb{F}_2$ is the two-element finite field with addition corresponding to Boolean exclusive-OR (EXOR, $\oplus$) and multiplication corresponding to Boolean AND ($\land$). We then represent classical gates as operators $f:\mathbb{F}_2^n\to\mathbb{F}_2^m$, and we typically refer to $f$ as a (classical) function. For brevity, if $m=1$ we call $f$ {\it Boolean}.

The quantum circuit model, one of the prominent models of quantum computation \cite{nc00}, generalizes the classical circuit model to deal with quantum effects. In particular, it describes the state space of a system of $n$ qubits as a vector in a $2^n$-dimensional complex vector space $\mathcal{H}$ spanned by the (classical) $n$ bit states. By convention, we refer to the classical states as the standard or computational basis of $\mathcal{H}$ and write them in Dirac notation: $\ket{x}, x\in\F^n$.

In contrast to the classical circuit model, quantum gates are restricted to a subset of all operators on $\mathcal{H}$ -- specifically, quantum gates are linear operators $U:\mathcal{H}\to\mathcal{H}$ that preserve the $L_2$ norm. Such operators $U$ satisfy $U^\dagger U=UU^\dagger=I$, where $U^\dagger$ denotes the adjoint of $U$, and are known as {\it unitary}.  Given that unitary operators are invertible, we see that the subset of quantum transformations that permute the computational basis states are exactly the set of invertible classical transformations -- we call such functions {\it reversible}, with the intuition that any computation performed by reversible functions can be undone or reversed. The {\it Toffoli} gate, $$\Lambda_2(X):|x\rangle|y\rangle|z\rangle\mapsto|x\rangle|y\rangle\ket{z \oplus (x\land y)},$$ is an example of a reversible function.

We can also have classical/quantum computations that use {\it ancillae} -- being bits/qubits that can be initialized to the $0$/$\ket{0}$ or $1$/$\ket{1}$ state and act as a temporary register.  Without loss of generality, we require that all ancillae are initialized in the $0/\ket{0}$ state. In the case of a circuit with $n$ bits, $m$ of which are data bits (i.e. $n-m$ is the number of ancillae), we describe the state space as some subspace $V$ of $\F^n$ with dimension $m$. We will typically use $n$ to represent the total number of bits in a system, and $m$ to refer to the number of data bits.

While all reversible classical gates are linear as operators over $\mathcal{H}$, they need not be linear as operators over $\mathbb{F}_2^n$. In particular, we call $f:\mathbb{F}_2^n\to\mathbb{F}_2^m$ {\it linear} if $f(x\oplus y) = f(x)\oplus f(y)$. For instance, the reversible {\it controlled-NOT} gate $$CNOT:|x\rangle|y\rangle\mapsto|x\rangle|x\oplus y\rangle$$ is linear over $\F^n$. It is a known result that the set of all linear reversible functions are those that can be computed by a circuit consisting of only $CNOT$ gates \cite{pmh08}.

Throughout this paper we will also be interested in linear Boolean functions and their relation to linear reversible functions. For convenience, we refer to the set of $n$-ary linear Boolean functions $\F^n\to\F$ as the {\it dual vector space} $(\F^n)^*$ of $\F^n$, and note that a linear Boolean function $f:\F^n\to\F$ can be represented as a row vector over $\mathbb{F}_2^n$ -- i.e., $x^T$ for some $x\in\mathbb{F}_2^n$. Furthermore, for a set of linear Boolean functions $S\subseteq(\F^n)^*$, we define $\rk(S)$ as the maximum number of independent (row) vectors in $S$, or equivalently the dimension of the subspace $V^*$ spanned by $S$.

As this paper is concerned with the optimization of quantum circuits, we also define some quantum gates commonly used in fault tolerant models. In particular, we define the $T$ and Hadamard gates, $$T:\ket{x}\mapsto e^{\frac{i\pi x}{4}}\ket{x},\quad H:\ket{x}\mapsto\frac{\ket{0}+(-1)^x\ket{1}}{\sqrt{2}}.$$ We will show that circuits over the gate set $\{CNOT, T\}$ implement linear reversible functions with discrete phases corresponding to the eighth roots of unity. A side result of this paper is a proof that $\{CNOT, T\}$ circuits can be simulated on a classical computer in polynomial time. If this set is further extended with the Hadamard gate we achieve a gate set that is universal for quantum computation. We call  $\{H, CNOT, T\}$ the ``Clifford + $T$" gate set, as it contains the Clifford group generators $\{H, P:=T^2, CNOT\}$ along with the $T$ gate. Moreover, $\{H, CNOT, T\}$ is a minimal generating set for the Clifford + $T$ gate set. 

Since quantum gates are commonly defined by unitary matrices, we provide the equivalent matrix definitions below:
$$H:=\frac{1}{\sqrt{2}}\begin{pmatrix} 1 & 1 \\ 1 & -1 \end{pmatrix},\quad CNOT:=\begin{pmatrix} 1 & 0 & 0 & 0 \\ 0 & 1 & 0 & 0 \\ 0 & 0 & 0 & 1 \\ 0 & 0 & 1 & 0 \end{pmatrix},\quad T:=\begin{pmatrix} 1 & 0 \\ 0 & e^{\frac{i\pi}{4}} \end{pmatrix}.$$


\subsection{Computable sets of linear Boolean functions}

While every linear reversible function $f$ over $n$ inputs can be written as $n$ linear Boolean functions, i.e. $$f(a_1,a_2,...,a_n) = \left(f_1(a_1,...,a_n), ..., f_n(a_1,...,a_n)\right),$$ not every set of linear Boolean functions defines a reversible function. For instance, $f(a_1,a_2,a_3) = (a_1, a_2, a_1\oplus a_2)$ is irreversible since the input $a_3$ is effectively destroyed. It is easy to verify that a set of $n$ linear Boolean functions forms the outputs of an $n$-ary linear reversible function if and only if they have rank equal to the dimension of the input space.

\begin{lemma}
\label{lem:reversibility}
Given a subspace $V$ of $\F^n$ and a set of linear Boolean functions $S=\{f_1,f_2\dots f_n\}\subseteq V^*$, the linear function $f:V\to W$ defined as 
$$f(a_1,a_2,\dots, a_n) = (f_1(a_1,\dots, a_n), \dots, f_n(a_1,\dots, a_n))$$ is reversible if and only if $\rk(S)=\dim(V).$
\end{lemma}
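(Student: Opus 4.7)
The plan is to translate the claim into a statement about the rank of a matrix, at which point the lemma follows immediately from the rank-nullity theorem. After choosing a basis of $V$, each linear functional $f_i \in V^*$ is represented by a row vector in $\F^{\dim V}$, and stacking these rows produces an $n \times \dim(V)$ matrix $M$ with $f(a) = Ma$ for all $a \in V$. By definition, the row rank of $M$ is exactly $\rk(S)$, and since row rank equals column rank, this also coincides with the dimension of the image $f(V)$.

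Reversibility of $f: V \to W$ means $f$ is a bijection; taking $W := f(V)$ makes surjectivity automatic, so reversibility reduces to injectivity. For a linear map between finite-dimensional spaces, injectivity is equivalent to $\ker(f) = \{0\}$, and the rank-nullity theorem gives
$$\dim(V) \;=\; \dim(\ker f) + \dim(\mathrm{im}\,f) \;=\; \dim(\ker f) + \rk(S).$$
Hence $\ker(f) = \{0\}$ if and only if $\rk(S) = \dim(V)$, which is exactly the required equivalence. For the forward direction one simply reads off that $f$ bijective forces $\dim(\ker f) = 0$; for the reverse direction, $\rk(S) = \dim(V)$ forces $\dim(\ker f) = 0$, so $f$ is injective and hence bijective onto $W = f(V)$.

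There is no real obstacle here: the lemma is essentially a restatement of rank-nullity in the language of linear Boolean functionals. The only subtlety is making sure that $\rk(S)$, defined as the maximum number of independent row vectors among the $f_i$, coincides with the rank of $f$ as a linear map; this is exactly the row-rank/column-rank equality handled by the matrix reformulation above. One minor care point is the interpretation of the codomain $W$: since the lemma does not pin $W$ down, the cleanest reading is to take $W = f(V)$, which is consistent with the intended notion of reversibility as a bijection and avoids any ambiguity about surjectivity.
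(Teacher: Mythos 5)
Your proof is correct: the paper omits a proof of this lemma entirely (introducing it only with ``it is easy to verify''), and your rank--nullity argument is exactly the standard linear-algebra justification the authors are implicitly relying on. The one point you flag---reading $W$ as $f(V)$ so that reversibility reduces to injectivity---is the right resolution of the lemma's unspecified codomain.
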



Since the unitary quantum circuit model is reversible, a set of linear Boolean functions $S$ can only be computed simultaneously (i.e. there exists a quantum circuit implementing the transformation $\ket{a_1a_2...a_n}\mapsto\ket{b_1b_2...b_n}$ where for each $f\in S$, $f(a_1,a_2,...,a_n)=b_i$ for some $i$) if it defines a reversible function. We call such a set of linear Boolean functions {\it (reversibly) computable} over a particular input space $V$ -- as we will be concerned strictly with reversible computations, we frequently omit the qualifier {\it reversible}.

We may also want compute a set $S$ of linear Boolean functions with a linear reversible function on $n>|S|$ qubits. In this case, since every $n$-ary linear reversible function corresponds to some set of $n$ linear Boolean functions, a linear reversible function computes $S$ if and only if there exists some computable superset $S'$ of $S$. Equivalently from Lemma~\ref{lem:reversibility}, a set $S\subseteq (\F^n)^*$ is (reversibly) computable over a subspace $V$ of $\F^n$ if and only if there exists a superset $S'$ of $S$ with cardinality $n$ such that $\rk(S') = \dim(V)$.

\begin{figure}
\centerline{
\Qcircuit @C=1em @R=.7em {
\lstick{x_1} & \targ & \qw & \targ & \qw & \qw & \rstick{\!\!x_1\oplus x_2\oplus x_3\oplus x_4} \\ 
\lstick{x_2} & \qw & \ctrl{1} & \qw & \targ & \qw & \rstick{\!\!x_2\oplus x_4} \\
\lstick{x_3} & \qw & \targ & \ctrl{-2} & \qw & \qw & \rstick{\!\!x_2\oplus x_3} \\
\lstick{x_4} & \ctrl{-3} & \qw & \qw & \ctrl{-2} & \qw & \rstick{\!\!x_4}
}}
\caption{A circuit computing the functions $x_1\oplus x_2\oplus x_3\oplus x_4,$ $x_2\oplus x_4,$ and $x_2\oplus x_3$.}
\label{fig:comp}
\end{figure}
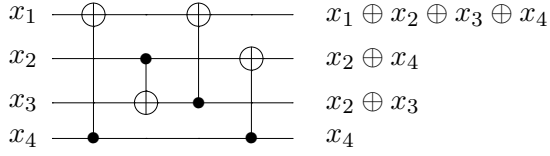

Before moving on, we establish the condition under which a computable superset of linear Boolean functions exists. Given a set $S=\{f_1,f_2,...,f_k\}\subseteq(\F^n)^*$ and subspace $V$, we can observe that we only need to find some $f_{k+1},...,f_n\in(\F^n)^*$ such that $\rk\left(\{f_1,f_2,...,f_n\}\right)=\dim(V)$. It is not hard to see that such $f_{k+1},...,f_n$ exist if and only if the number of linearly independent vectors in $(\F^n)^*$ needed is at most $n-k$.

\begin{lemma}
\label{lem:completebasis}
Given a subspace $V$ of $\F^n$ and a set of linear Boolean functions $S\subseteq (\F^n)^*$, there exists a superset $S'$ of $S$ with cardinality $n$ such that $\rk\left(S'\right)=\dim(V)$ if and only if
\begin{equation}\label{eq:indcond} 
	\dim(V) - \rk\left(S\right) \leq n - |S|.
\end{equation}
\end{lemma}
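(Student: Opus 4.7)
I would split the proof into the two implications, with the forward direction being almost immediate from rank arithmetic and the backward direction being a short explicit construction.

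For the forward direction ($\Rightarrow$), the plan is to use monotonicity and subadditivity of rank. Since $S \subseteq S'$ and $|S' \setminus S| = n - |S|$, we have $\rk(S') \leq \rk(S) + (n - |S|)$, so the assumption $\rk(S') = \dim(V)$ rearranges immediately to $\dim(V) - \rk(S) \leq n - |S|$.

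For the backward direction ($\Leftarrow$), I would build $S'$ in two stages. Note first that the restrictions to $V$ of the elements of $S$ all lie in $V^*$, a space of dimension $\dim(V)$, so $\rk(S) \leq \dim(V)$ and the quantity $\dim(V) - \rk(S)$ is nonnegative. First I would choose $\dim(V) - \rk(S)$ functions in $(\F^n)^*$ whose restrictions to $V$ extend those of $S$ to a basis of $V^*$; adjoining them to $S$ yields an intermediate set of size $|S| + \dim(V) - \rk(S)$ and rank exactly $\dim(V)$. The hypothesis $\dim(V) - \rk(S) \leq n - |S|$ is precisely the slack needed to pad this intermediate set with $(n - |S|) - (\dim(V) - \rk(S))$ more elements of $(\F^n)^*$ to reach total size $n$. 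Since rank has already saturated at $\dim(V)$, which is the maximum possible for functions whose restrictions live in $V^*$, these extra elements cannot raise the rank, so $\rk(S') = \dim(V)$ as required.

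\textbf{Main obstacle.} The only subtlety will be ensuring that all added vectors are distinct from one another and from $S$. In the first stage, linear independence of the chosen restrictions in $V^*$ guarantees distinctness. In the padding stage the padding vectors are drawn from $(\F^n)^*$, which has $2^n$ elements, while at most $n$ have been chosen so far; since $2^n \geq n$ for all $n \geq 1$, there are always enough candidates, so the construction goes through without issue.
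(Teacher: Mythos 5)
Your proof is correct and follows essentially the route the paper itself sketches (the paper states this lemma without a formal proof, only the one-sentence observation preceding it): subadditivity of rank for the forward direction, and for the converse, first extend to a basis of $V^*$ and then pad with arbitrary further functionals, which cannot raise the rank. Your implicit reading of $\rk$ as the rank of the restrictions to $V$ (i.e., treating $S$ as a subset of $V^*$, as the paper does elsewhere) is the right one, and it is exactly what makes both the padding step and the backward implication go through.
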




We can note that inequality~(\ref{eq:indcond}) implies $|S|=\rk(S)$, i.e., $S$ is linearly independent, when $\dim(V)=n$.

\section{\{$CNOT$, $T$\} circuits}
\label{sec:cnott}

We first consider circuits over the gate set $\{CNOT, T, P:=T^2, Z:=T^4, T^\dagger:=T^7, P^\dagger:=T^6\}$, as they have a particular property that will be crucial to synthesizing low $T$-depth circuits.  We remind the reader that the even powers of the $T$ gate are all Clifford gates, whereas all odd powers lie outside the Clifford group.  This is an essential observation for practical considerations.  Furthermore, no power of the $T$ gate requires more than a single non-Clifford $T$ gate to implement.  We usually omit the extraneous gates and refer to this gate set by the generating set $\{CNOT, T\}$. 

It can be observed that since $CNOT\ket{x}\ket{y}=\ket{x}\ket{x\oplus y}$ and $T\ket{x}=e^{\frac{i\pi x}{4}}\ket{x}$ for $x, y\in\F$, a $\{CNOT, T\}$ circuit can be described as computing a linear reversible function on the input basis state, with an added phase that is some power of $\omega:=e^{i\pi / 4}$. Stated more precisely \cite{ammr13},

\begin{lemma}
\label{lem:Tpar}
A unitary $U\in U(2^n)$ is exactly implementable by an $n$-qubit circuit over $\{CNOT, T\}$ if and only if
$$U|x_1x_2\dots x_n\rangle = \omega^{p(x_1,x_2,\dots,x_n)}|g(x_1, x_2, \dots, x_n)\rangle
$$ where $x_1x_2...x_n\in\F^n$ and \vspace{-0.5em}$$p(x_1,x_2,\dots,x_n)=\sum_{i=1}^l c_i\cdot f_i(x_1, x_2, \dots, x_n)$$ for some linear reversible function $g\in\F^n\to\F^n$ and linear Boolean functions $f_1, f_2, ..., f_l\in(\F^n)^*$ with coefficients $c_1, c_2, ..., c_l\in\mathbb{Z}_8$.
\end{lemma}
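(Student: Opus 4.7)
The plan is to prove the two directions separately: the forward direction ($\Rightarrow$) by induction on the gate count, and the reverse direction ($\Leftarrow$) by explicit construction using the result of \cite{pmh08} that any linear reversible function is realized by a $CNOT$-only circuit.

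For the forward direction, I would induct on the length of the circuit. The empty circuit realizes the identity, which has the claimed form with $g$ the identity function and $p\equiv 0$ (empty sum, $l=0$). For the inductive step, suppose the circuit so far realizes $U$ with $U|x\rangle = \omega^{p(x)}|g(x)\rangle$ as in the statement. If we append a $CNOT$ with control $c$ and target $t$, the new linear reversible function is $g$ with its $t$-th component replaced by $g_t \oplus g_c$, which is still linear reversible, and $p$ is unchanged. If we append $T$ on qubit $i$, then $g$ is unchanged while $p$ gains a new summand $c_{l+1}\cdot f_{l+1}$ with $f_{l+1}(x)=g_i(x)$ and $c_{l+1}=1$, since $T|g(x)\rangle_i = \omega^{g_i(x)}|g(x)\rangle_i$ and $g_i$ is a linear Boolean function (a component of a linear reversible function). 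Powers of $T$ that we subsumed into the gate set are handled identically, just contributing the appropriate coefficient in $\mathbb{Z}_8$.

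For the reverse direction, given $g$ and $p = \sum_{i=1}^{l} c_i f_i$, I would build a circuit in two stages. First, for each nonzero $f_i$ (zero ones can be dropped), I would use the fact that any nonzero vector in $(\F^n)^*$ extends to a basis: pick a linear reversible $h_i$ whose first output component equals $f_i$. Realize $h_i$ by a $CNOT$-only circuit (\cite{pmh08}), then apply $T^{c_i}$ on the first qubit (interpreting $T^{c_i}$ as $c_i$ applications of $T$), then apply the $CNOT$-circuit for $h_i^{-1}$. This sequence leaves the basis state intact but multiplies it by $\omega^{c_i f_i(x)}$. Composing these blocks for $i=1,\dots,l$ produces the overall phase $\omega^{p(x)}$ with no change to the computational basis state. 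Finally, append the $CNOT$-circuit realizing $g$ to send $|x\rangle$ to $|g(x)\rangle$.

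The main obstacle is minor bookkeeping rather than a conceptual hurdle: one has to verify that conjugation by the $CNOT$-circuit for $h_i$ turns a $T^{c_i}$ on qubit $1$ into the global phase gate $|x\rangle\mapsto \omega^{c_i f_i(x)}|x\rangle$, and that the summands in $p$ can be reordered and combined freely since all the diagonal phase contributions commute. Once these observations are in place, both directions assemble straightforwardly, and the set of unitaries of the stated form is seen to coincide exactly with the $\{CNOT,T\}$-implementable unitaries on $n$ qubits.
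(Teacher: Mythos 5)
Your proof is correct and follows essentially the same route as the paper (which defers the formal proof to the cited reference but sketches exactly this argument in Section~3): the forward direction by gate-by-gate induction tracking the linear reversible state $g$ and accumulating a phase term $g_i$ for each $T$ gate, and the reverse direction by computing each $f_i$ onto a wire with a $CNOT$ stage, applying $T^{c_i}$, uncomputing, and finally synthesizing $g$ with a $CNOT$-only circuit via \cite{pmh08}. The bookkeeping points you flag (conjugation turning $T^{c_i}$ into the diagonal phase $\omega^{c_i f_i(x)}$, commutativity of diagonal phases, dropping $f_i=0$) are indeed the only things to check, and they all go through.
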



As a result of Lemma~\ref{lem:Tpar}, we can fully characterize any unitary $U\in U(2^n)$ implementable by a $\{CNOT, T\}$ circuit with a set $S\subseteq\mathbb{Z}_8\times(\F^n)^*$ of linear Boolean functions together with coefficients in $\mathbb{Z}_8$, and a linear reversible output function $g:\F^n\to\F^n$, with the interpretation $$U_{\langle S, g\rangle}\!\!:\!|x_1x_2...x_n\rangle\mapsto\omega^{\mathlarger{\sum}\limits_{(c,f)\in S} \!\!\! c\cdot f(x_1,x_2,\dots,x_n)}|g(x_1, x_2, ..., x_n)\rangle.$$ Moreover, $S$ and $g$ are efficiently computable given a $\{CNOT, T\}$ circuit, taking time linear in the number of qubits and gates. In computing $S$ it also becomes apparent when $T$ gates, possibly physically separated within the circuit, are applied to the same value and can thus be replaced by a phase gate such as $P:\ket{x}\mapsto\omega^{2\cdot x}\ket{x}$. Our experimental results show that a large number of $T$ gates can be removed in this way.


\begin{example}
\label{ex:Trem}

Consider the following circuit.

\centerline{
\Qcircuit @C=.7em @R=.5em {
\lstick{x_1} & \qw & \gate{T_1} & \ctrl{1} & \qw & \targ & \qw & \ctrl{1} & \qw & \qw & \rstick{\!\!\!\!x_2} \\
\lstick{x_2}& \qw & \qw & \targ & \ctrl{1} & \ctrl{-1} & \ctrl{1} & \targ & \gate{T_2^\dagger} & \qw & \rstick{\!\!\!\!x_1} \\
\lstick{x_3} & \ctrl{1} & \qw & \ctrl{1} & \targ & \targ & \targ & \gate{T_3} & \qw & \qw & \rstick{\!\!\!\!x_3\oplus x_4} \\
\lstick{x_4}& \targ & \gate{T_4} & \targ & \qw & \ctrl{-1} & \qw & \qw & \qw & \qw & \rstick{\!\!\!\!x_4}
}
}
\vspace{0.5em}
If we track the effect of the $CNOT$ gates we see that $T_1$ and $T_2^\dagger$ (indices are used to mark different $T$ gates within the circuit) are both applied to qubits in the state $\ket{x_1}$, resulting in a cumulative phase of $\omega^{x_1+7x_1}=\omega^{8x_1}=1^{x_1}=1$. As such, both gates can be removed.  Likewise, $T_3$ and $T_4$ are both applied to qubits in the state $\ket{x_3\oplus x_4}$, and their phases combine to $\omega^{2(x_3\oplus x_4)}$ -- this pair of $T$ gates can thus be replaced with a single $P$ gate. This results in an optimized circuit:
\centerline{
\Qcircuit @C=.7em @R=.5em {
\lstick{x_1} & \qw      & \ctrl{1} & \qw      & \targ     & \qw      & \ctrl{1} & \qw & \rstick{\!\!\!\!x_2} \\
\lstick{x_2} & \qw      & \targ    & \ctrl{1} & \ctrl{-1} & \ctrl{1} & \targ    & \qw & \rstick{\!\!\!\!x_1} \\
\lstick{x_3} & \ctrl{1} & \ctrl{1} & \targ    & \targ     & \targ    & \gate{P} & \qw & \rstick{\!\!\!\!x_3\oplus x_4} \\
\lstick{x_4} & \targ    & \targ    & \qw      & \ctrl{-1} & \qw      & \qw      & \qw & \rstick{\!\!\!\!x_4}
}
}
\vspace{0.5em}

\noindent which may be optimized further to the form $SWAP(x_1,x_2) CNOT(x_4,x_3) P(x_3)$ by rewriting the linear reversible section. 
\end{example}

Once $S$ and $g$ have been computed, the proof of Lemma~\ref{lem:Tpar} \cite{ammr13} gives a constructive method for synthesizing a circuit implementing $U_{\langle S, g\rangle}$. However, this na\"{i}ve method of re-synthesis may end up with {\it worse} $T$-depth than the original circuit, despite possibly reduced $T$-count.

We can instead recall from Section~\ref{sec:prelim} that if a subset $A\subseteq S$ is reversibly computable (i.e. if the linear Boolean functions in $A$ are reversibly computable), we can construct a linear reversible function with outputs simultaneously computing the functions in $A$; in this case $|A|$ of the necessary phase factors could then be applied in parallel. Synthesis can thus proceed by partitioning $S$ into computable subsets, then for each partition $A\subseteq S$ first compute a (reversible) superset of $A$ with a stage of $CNOT$ gates; many efficient algorithms exist \cite{br01, pmh08, m07} that can decompose a linear reversible function into $CNOT$ gates. Next we apply the relevant phase gates in parallel to add the phase $\omega^{\sum_{(c,f)\in A} c\cdot f(x_1,x_2,\dots,x_n)}$, and finally uncompute by reversing the $CNOT$ stage. Given that at most one $T$ gate is used to implement any integer power of $T$, every partition will have a $T$-depth of at most 1.

As a result, any unitary $U$ implementable over $\{CNOT, T\}$ can be implemented in $T$-depth $k$ where $k$ is the minimum number of sets partitioning\footnote{For the remainder of this paper we do not separate $S_0$ and $S_1$ to simplify the presentation, though our algorithm can be trivially modified to partition $S_0$ and $S_1$ separately.} $S_1=\{(c, f)\in S | c\equiv 1 \mod 2\}$ into computable subsets, as elements in $S_0=S\setminus S_1$ do not require $T$ gates to implement. In fact, we can trivially see that given a specific set $S$ and output $g$, $k$ is the minimal $T$-depth, as any layer of $T$ gates in a circuit implementing $U_{\langle S, g\rangle}$ corresponds to a computable subset of $S$. 

It is important to note that while this method reduces and maximally parallelizes the $T$ gates, it will not necessarily find the optimal $T$-count and by extension $T$-depth.  Specifically, given a $\{CNOT, T\}$ circuit with phase defined by the set $S$ there may be some distinct $S'$ that defines an equivalent computation using fewer $T$ gates. Consider, for instance, the set $S_{\emptyset}=\left\{(1, f) | f\in(\F^4)^*\right\}$. We note that the integer sum $\sum_{f\in (\F^4)^*}f(x_1,x_2,x_3,x_4)$ is equal to $8$ for any non-zero $(x_1,x_2,x_3,x_4)\in\F^4$, and $0$ for $(0,0,0,0)$. Since $\omega^8=\omega^0=1$, $S_{\emptyset}$ computes the trivial phase on every input and is therefore equivalent to the empty set $\emptyset$. In this case, all $|S_{\emptyset}|=15$ $T$ gates can then be removed.


It turns out, through a brute force search, that for $n<4$, no two sets $S, S'$ requiring distinct numbers of $T$ gates define equivalent computations. As a result, we obtain a direct proof that the doubly controlled-$Z$ gate requires 7 $T$ gates to implement over $\{CNOT, T\}$ with any number of ancillae. For $n\geq 4$ however, the problem of minimizing $T$-count in $\{CNOT, T\}$ circuits reduces to a minimization problem over degree 1 polynomials in mixed arithmetic; moreover, since every such polynomial defines the global phase for some $\{CNOT, T\}$ circuit, the two problems are in fact equivalent.

\subsection{Parallelizing $\Lambda_2(Z)$}

To illustrate $\{CNOT, T\}$ re-synthesis, consider the circuit in Figure~\ref{fig:ccZ}, implementing the doubly controlled-$Z$ gate, $$\Lambda_2(Z):\ket{x_1x_2x_3}\mapsto\omega^{4\cdot x_1\land x_2\land x_3}\ket{x_1x_2x_3}$$ over $\{CNOT, T\}$. We track the effect of each $CNOT$ gate on the state of the qubits, as annotated in the circuit. When a $T$/$T^\dagger$ gate is applied, we add/subtract a term in the exponent of $\omega$ corresponding to the state of the target qubit. The resulting transformation is then $\Lambda_2(Z):\ket{x_1x_2x_3}\mapsto\omega^{p(x_1,x_2,x_3)}\ket{x_1x_2x_3},$ where \begin{align*}p(x_1,x_2,x_3)=x_1 + x_2 + x_3 - (x_1\oplus x_2) - (x_1\oplus x_3) - (x_2\oplus x_3) + (x_1\oplus x_2\oplus x_3).\end{align*} In fact, since $2\cdot(x\land y) = x + y - x\oplus y$, we see that $\omega^{p(x_1,x_2,x_3)}=\omega^{4\cdot(x_1\land x_2\land x_3)}$ as expected.

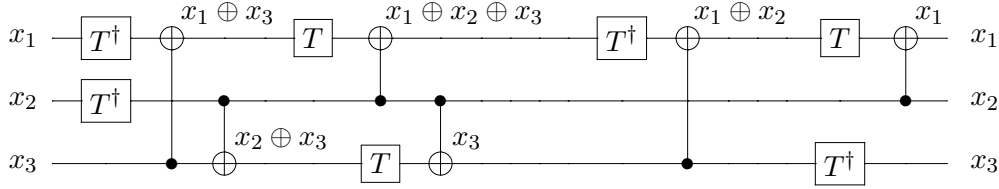
\begin{figure}
\centerline{
\Qcircuit @C=1em @R=.7em {
\lstick{x_1} & \gate{T^\dagger} & \targ & 
\ustick{\:x_1\oplus x_3}\qw & \push{\rule{0em}{1em}}\qw & 
\gate{T} & \targ & 
\ustick{\;\;\;\;\;\;x_1\oplus x_2\oplus x_3} \qw & 
\push{\rule{0em}{1em}}\qw & \push{\rule{0em}{1em}}\qw & \push{\rule{0em}{1em}}\qw & \push{\rule{0em}{1em}}\qw &
\gate{T^\dagger} & \targ & 
\ustick{\;\;\;x_1\oplus x_2} \qw & \push{\rule{0em}{1em}}\qw & \push{\rule{0em}{1em}}\qw &
\gate{T} &
\targ & \ustick{\!\!\!\!\!\!\!x_1}\qw & \rstick{\!\!\!\!x_1} \\
\lstick{x_2} & \gate{T^\dagger} & \qw & 
\ctrl{1} & \push{\rule{0em}{1em}}\qw & 
\qw & \ctrl{-1} & 
\ctrl{1} & \qw & \push{\rule{0em}{1em}}\qw & \push{\rule{0em}{1em}}\qw & \push{\rule{0em}{1em}}\qw &
\qw & \qw & 
\qw & \push{\rule{0em}{1em}}\qw &\push{\rule{0em}{1em}}\qw & \qw &
\ctrl{-1} & \qw & \rstick{\!\!\!\!x_2} \\
\lstick{x_3} & \qw & \ctrl{-2} & 
\targ & \ustick{\:\:\:\:\:x_2\oplus x_3}\qw & 
\push{\rule{0em}{1em}}\qw & \gate{T} & 
\targ & \ustick{\!\!\!\!\!\!x_3}\qw & \push{\rule{0em}{1em}}\qw & \push{\rule{0em}{1em}}\qw & \push{\rule{0em}{1em}}\qw &
\qw & \ctrl{-2} & \qw & \push{\rule{0em}{1em}}\qw &  \push{\rule{0em}{1em}}\qw &
\gate{T^\dagger} & \qw & \qw & \rstick{\!\!\!\!x_3}
}
}
\caption{$\{CNOT, T\}$ circuit implementing the doubly controlled $Z$ gate.}
\label{fig:ccZ}
\end{figure}

The $T$-stages in Figure~\ref{fig:ccZ} also correspond to a partition of $S$, specifically $$\left\{\{-(x_1),-(x_2)\},\{x_1\oplus x_3, x_2\oplus x_3\},\{-(x_1\oplus x_2\oplus x_3)\}, \{x_1\oplus x_2, -(x_3)\}\right\}.$$ It can easily be verified that for each subset $S$ in this partition, $\dim(V) - \rk(S)\leq n-|S|$ (i.e. $S$ satisfies (\ref{eq:indcond})), since $\dim(V)=n=3$ and each subset is linearly independent. By contrast, the partition $$\left\{\{-(x_1),-(x_2)\},\{x_1\oplus x_3, x_2\oplus x_3, x_1\oplus x_2\},\{-(x_1\oplus x_2\oplus x_3)\}, \{-(x_3)\}\right\}$$ could not have been synthesized as the set $\{x_1\oplus x_3, x_2\oplus x_3, x_1\oplus x_2\}$ has rank $2$, thus the mapping $\ket{x_1x_2x_3}\mapsto\ket{(x_1\oplus x_3)(x_2\oplus x_3)(x_1\oplus x_2)}$ is not reversible.

While we haven't yet described how to find a minimal computable partition algorithmically, we can observe that the $T$-depth $3$ partition $$\left\{\{-(x_1),-(x_2), x_1\oplus x_3\}, \{-(x_1\oplus x_2\oplus x_3)\},\{-(x_3), x_1\oplus x_2, x_2\oplus x_3\}\right\}$$ is computable since each subset satisfies (\ref{eq:indcond}), and is also minimal. If, however, we added an ancilla when re-synthesizing Figure~\ref{fig:ccZ}, we can use the extra qubit to generate a smaller partition. In particular, we know $\ket{x_1x_2x_3}\mapsto\ket{x_1x_2(x_1\oplus x_3)}$ is reversible so $\ket{x_1x_2x_3}\ket{0}\mapsto\ket{x_1x_2(x_1\oplus x_3)}\ket{0}$ is as well. We can then compute the value $x_1\oplus x_2\oplus x_3$ into the ancilla with $2$ $CNOT$ gates and apply $4$ $T$ gates (one for each qubit) at the same time. To connect this intuitive idea with equation (\ref{eq:indcond}), we observe that $n=4$, $\dim(V)=3$ since the input $\ket{x_1x_2x_3}\ket{0}$ spans a space of dimension $3$, and $$\rk(\{-(x_1),-(x_2), x_1\oplus x_3, -(x_1\oplus x_2\oplus x_3)\})=3,$$ so $\dim(V) - \rk(S) = 0 \leq 0 = n - |S|.$ Figure~\ref{fig:exsynth} shows the resulting circuit, implementing $\Lambda_2(Z)$ in $T$-depth 2.

\begin{figure}
\centerline{
\Qcircuit @C=.7em @R=.7em {
\lstick{x_1} & \ctrl{2} & \qw & \qw & \gate{T^\dagger} & \qw & \qw & \ctrl{2} & \targ & \qw & \gate{T} & \qw & \targ & \qw & \rstick{x_1} \\
 \lstick{x_2} & \qw & \qw & \ctrl{2} & \gate{T^\dagger} & \ctrl{2} & \qw & \qw & \ctrl{-1} & \targ & \gate{T} & \targ & \ctrl{-1} & \qw & \rstick{x_2} \\
\lstick{x_3} & \targ & \ctrl{1} & \qw & \gate{T} & \qw & \ctrl{1} & \targ & \qw & \ctrl{-1} & \gate{T^\dagger} & \ctrl{-1} & \qw & \qw & \rstick{x_3} \\
\lstick{0} & \qw & \targ & \targ & \gate{T^\dagger} & \targ & \targ & \qw & \qw & \qw & \qw & \qw & \qw & \qw & \rstick{0} \\
}
}
\caption{$T$-depth 2 implementation of Figure~\ref{fig:ccZ} with one ancilla.}
\label{fig:exsynth}
\end{figure}

\section{Matroids}
\label{sec:mat}

We now turn our attention to the problem of determining a minimal partition of a set of linear Boolean functions into computable sets. Due to the connection between the computability condition (\ref{eq:indcond}) on sets of linear Boolean functions and linear independence, we are able to phrase the problem as a {\it matroid partitioning} problem. To do so, we first introduce the concept of a {\it matroid}, an algebraic structure that generalizes the idea of linear independence in vector spaces.

\begin{definition}
\label{def:matroid}
A finite matroid is a pair $(S, I)$ where $S$ is a finite set and $I$ is a set of subsets of $S$ such that
\begin{enumerate}
	\item $\emptyset\in I$.
	\item For all $A, B\subset S$, if $A\in I$ and $B\subset A$, then $B\in I$.
	\item For all $A, B\in I$, if $|A|>|B|$, then there exists some $a\in A$ such that $B\cup\{a\}\in I$.
\end{enumerate}
\end{definition}

It turns out that a set of linear Boolean functions, together with an independence relation defined by the equality~(\ref{eq:indcond}), forms a matroid:

\begin{lemma}
\label{lem:matroid}
For any subspace $V$ of $\mathbb{F}_2^n$ with dimension $m$ and set of linear Boolean functions $S=\{f_1, f_2,\dots, f_k\}\subseteq V^*$, let $I$ denote the set 
$$\{A\subseteq S | m - \rk\left(A\right) \leq n - |A|\}.
$$ 

The pair $(S, I)$ is a finite matroid.
\end{lemma}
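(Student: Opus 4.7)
The plan is to verify the three axioms of Definition~\ref{def:matroid} for $(S,I)$ directly. I would first rewrite the defining inequality $m-\rk(A)\le n-|A|$ in the equivalent form $|A|-\rk(A)\le n-m$, so that membership in $I$ is governed by the \emph{defect} $|A|-\rk(A)$, which measures how far $A$ is from being linearly independent. The bound $n-m$ is a nonnegative constant since $V\subseteq\F^n$ forces $m\le n$.

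Axiom 1 is immediate: $|\emptyset|-\rk(\emptyset)=0\le n-m$. For axiom 2 (downward closure), I would observe that removing one element from any set decreases its cardinality by exactly $1$ and its rank by $0$ or $1$, so the defect never increases on passing to a subset. A short induction on $|A\setminus B|$ then gives $|B|-\rk(B)\le|A|-\rk(A)\le n-m$ whenever $B\subseteq A\in I$.

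The substantive step is axiom 3. Given $A,B\in I$ with $|A|>|B|$, I would split into two cases according to whether some element of $A\setminus B$ lies outside $\spn(B)$. In the easy case such an $a$ exists, and then $\rk(B\cup\{a\})=\rk(B)+1$, so the defect of $B\cup\{a\}$ equals that of $B$ and $B\cup\{a\}\in I$ follows from $B\in I$. In the remaining case $A\setminus B\subseteq\spn(B)$, hence $A\subseteq\spn(B)$ and $\rk(A)\le\rk(B)$; combining this with $A\in I$ and $|A|\ge|B|+1$ yields
\[
|B|+1\;\le\;|A|\;\le\;(n-m)+\rk(A)\;\le\;(n-m)+\rk(B),
\]
which rearranges to $(|B|+1)-\rk(B)\le n-m$. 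Since $|A|>|B|$ guarantees $A\setminus B\ne\emptyset$, picking any $a\in A\setminus B$ gives $B\cup\{a\}\in I$.

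The main obstacle is the second case of the exchange axiom, where adding $a$ to $B$ does not increase the rank and so the defect grows by one. The key observation that unlocks this case is that the strict inequality $|A|>|B|$, together with $A\in I$ and $A\subseteq\spn(B)$, forces $B$ itself to have defect strictly less than the budget $n-m$, giving exactly enough slack to absorb the increase. Once this is isolated, the rest reduces to direct bookkeeping.
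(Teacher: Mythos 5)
Your proof is correct and follows essentially the same route as the paper's: a direct verification of the three axioms, with the exchange axiom split into the case where some element of $A\setminus B$ increases the rank of $B$ (so the defect is unchanged) and the complementary case where $\rk(A)\le\rk(B)$, which is absorbed by the slack coming from the strict inequality $|A|>|B|$. Your reformulation of membership in $I$ as the defect bound $|A|-\rk(A)\le n-m$ is a clean but cosmetic repackaging of the same inequalities.
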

\begin{proof}
We verify that $(S, I)$ satisfies all three conditions of Definition~\ref{def:matroid}.
\begin{enumerate}
	\item $m - \rk(\emptyset) \leq n - |\emptyset|$ is trivially true since $m\leq n$. Thus $\emptyset\in I.$
	\item Suppose $A, B\subset S$, where $A\in I$ and $B\subset A$.
		Since $A=B\cup (A\setminus B)$ we see that \begin{align*} \rk(A)&\leq\rk(B)+\rk(A\setminus B) \\ &\leq\rk(B) + (|A| - |B|),\end{align*} and so $\rk(A) - \rk(B) \leq |A| - |B|$. Since $m - \rk(A) \leq n -|A|$, we see that $$m \leq n +\rk(A) -|A|\leq n + \rk(B) - |B|,$$ and thus $m - \rk(B) \leq n -|B|$.

	\item Suppose $A, B\in I$ and $|A|>|B|$. If $\rk(A) \leq \rk(B)$, then $$m-\rk(B)\leq m-\rk(A)\leq n-|A|<n-|B|,$$ and so $m - \rk(B\cup\{s\}) \leq n - |B\cup\{s\}|$ for any $s\in A$. 

		Otherwise, $\rk(A) > \rk(B)$ and we can let $A'$ and $B'$ be maximal linearly independent subsets of $A$ and $B$, respectively. Since $A'\not\subseteq \spn(B')$, for any $s\in A\setminus \spn(B')$, $B'\cup\{s\}$ is linearly independent. Then, \begin{align*}m - \rk(B'\cup\{s\}) &= m - \rk(B\cup\{s\}) \\ &= m - \rk(B) - 1 \\ &\leq n - |B| - 1 \\ &= n - |B\cup\{s\}|.\end{align*}
\end{enumerate}
\end{proof}

With Lemmas \ref{lem:completebasis} and \ref{lem:matroid} we see that the problem of finding a minimal partition of the phase factors in a $\{CNOT, T\}$ circuit reduces to the more general matroid partitioning problem.

\subsection{Matroid partitioning}

The matroid partitioning problem can be defined as follows:
\begin{definition}(Matroid partitioning)

Given a matroid $(S, I)$, find a partition $\{A_1,A_2,\dots, A_k\}$ of $S$ such that $A_i\in I$ for each $1\leq i\leq k$ and for any other partition $\{A'_1, A'_2, \dots, A'_{k'}\}$ into independent subsets, $k'\geq k$.
\end{definition}

Matroid partitioning can, perhaps surprisingly, be solved in polynomial time, given an independence oracle for the matroid \cite{e65}. As a result, given an oracle for (\ref{eq:indcond}), the $T$ gates in a $\{CNOT, T\}$ circuit can be optimally partitioned efficiently. Since the condition in Lemma~\ref{lem:completebasis} can be checked by using Gaussian elimination to compute the matrix rank in $O(n^3)$ time,\footnote{In practice we reduce this to $O(m^2n)$ by storing vectors in $V^*$ as length $\dim(V)=m$ vectors. The $O(n^3)$ bound is used for simplicity.}  we thus see that a minimal partition can be computed in polynomial time. The rest of this section describes an algorithm for computing such a minimal partition.

The algorithm we use for solving the matroid partitioning problem is based on an algorithm due to Edmonds \cite{e65}. Given a matroid $(S, I)$ and a minimal (matroid) partition $P$ of $S'\subset S$, we take an element $s\in S\setminus S'$ not already partitioned and construct a minimal partition of $S'\cup\{s\}$. To create the new partition, we construct a directed graph $G_s$ containing a vertex $u$ for every $u\in S'\cup\{s\}$ as well as a vertex $\bot_p$ for every subset $p\in P$. The edges of $G_s$ represent changes to the partition that are invariant under the property of each subset being independent. In particular, for any $u, v\in S'\cup\{s\}$ there is a directed edge $v\to u$ in $G_s$ if and only if $u$ is contained in some subset $p\in P$ and $(p\setminus \{u\})\cup\{v\}\in I$, i.e. $v$ can be added to $p$ if we remove $u$. Additionally, given a subset $p\in P$ and element $u\in S'\cup\{s\}$, there exists an edge $u\to\bot_p$ if and only if $p\cup\{u\}\in I$. A path from $s$ to $\bot_p$ for some subset $p$ gives a set of updates to $P$ that produce a valid partition $P'$ of $S'\cup\{s\}$. Likewise, if there is no such path, there is no partition of size $|P|$ partitioning $S'\cup\{s\}$ (see \cite{e65} for a proof), and so a new subset $\{s\}$ is added to $P$. Figure~\ref{fig:partition} shows the full graph $G_s$ for one iteration when computing a minimal partition for $\Lambda_2(Z)$ (Figure~\ref{fig:ccZ}).

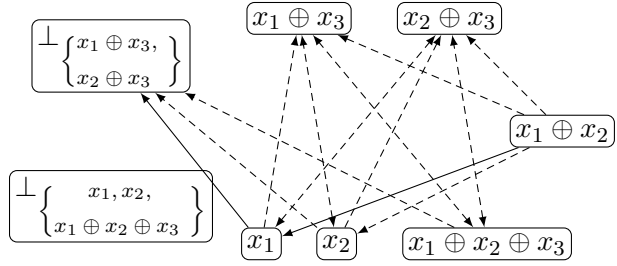
\begin{figure}
\centering
\begin{tikzpicture}[>=latex]
  \tikzstyle{vertex}=[draw, rounded corners=1mm, rectangle, fill=white,minimum size=12pt,inner sep=2pt]
  \tikzstyle{edge}=[densely dashed,->]
  \tikzstyle{doubleedge}=[densely dashed,<->]
  \tikzstyle{thickedge}=[->]
  \node[vertex] (G_1) at (-1,0.5) {$\bot_{\left\{\text{\parbox{4em}{$x_1\oplus x_3,$ \\ $x_2\oplus x_3$}}\right\}}$};
  \node[vertex] (G_2) at (-1,-1.5)   {$\bot_{\left\{\text{\parbox{6em}{$\quad\;\: x_1, x_2,$\\ $x_1\oplus x_2\oplus x_3$}}\right\}}$};
  \node[vertex] (G_3) at (1.5,1)  {$x_1\oplus x_3$};
  \node[vertex] (G_4) at (1,-2)  {$x_1$};
  \node[vertex] (G_5) at (3.5,1)  {$x_2\oplus x_3$};
  \node[vertex] (G_6) at (2,-2)  {$x_2$};
  \node[vertex] (G_7) at (4,-2)  {$x_1\oplus x_2\oplus x_3$};
  \node[vertex] (G_8) at (5,-0.5)  {$x_1\oplus x_2$};
  \draw[thickedge] (G_4) -> (G_1);
  \draw[edge] (G_4) -> (G_3);
  \draw[doubleedge] (G_4) -> (G_5);
  \draw[edge] (G_6) -> (G_1);
  \draw[doubleedge] (G_6) -> (G_3);
  \draw[edge] (G_6) -> (G_5);
  \draw[edge] (G_7) -> (G_1);
  \draw[doubleedge] (G_7) -> (G_3);
  \draw[doubleedge] (G_7) -> (G_5);
  \draw[edge] (G_8) -> (G_3);
  \draw[thickedge] (G_8) -> (G_4);
  \draw[edge] (G_8) -> (G_5);
  \draw[edge] (G_8) -> (G_6);
\end{tikzpicture}
\caption{The directed graph $G_s$ constructed when adding $x_1\oplus x_2$ to the minimal partition $\left\{\{x_1\oplus x_3, x_2\oplus x_3\}, \{x_1, x_2, x_1\oplus x_2\oplus x_3\}\right\}$. A minimum length path is shown in solid lines, resulting in the new partition $\left\{\{x_1\oplus x_3, x_2\oplus x_3, x_1\}, \{x_1\oplus x_2, x_2, x_1\oplus x_2\oplus x_3\}\right\}.$}
\label{fig:partition}
\end{figure}

Rather than generating the graph $G_s$ explicitly for each element $s$, we try to construct a path from $s$ to some $\bot_p$ breadth-first (Algorithm~\ref{alg:partition}). The time complexity of breadth-first search is $O(|E| + |V|)$ for a graph with edge set $E$ and vertices $V$. We can note that there are $|S'|+|P|+1$ vertices and at most $|S'|^2 + |P|\cdot(|S'| + 1) + (|S'|+|P|)$ edges in $G_s$, as well as the fact that $|P|\leq |S'|$. Since each edge requires a single test for independence in $O(n^3)$ time, the breadth first search requires time in $O(|S'|^2\cdot n^3 + |S'|).$

\begin{algorithm}
\caption{Matroid partitioning algorithm}
\label{alg:partition}
\centering
\begin{algorithmic}
	\Function{Partition}{$s, P, (S, I)$}
		\State{/* $I$ denotes the independence oracle,}
		\State{\hspace{0.7em} $P$ is a minimal partition */}
			\State Create path queue $Q$, $Q$.enqueue($s\to\emptyset$)
			\State Unmark each element of $S$, mark $s$
			\While{$Q$ non-empty}
				\State $t$ := $Q$.dequeue()
				\For{each $A\in P$}
					\State Set $A' := A\cup\{$head($t$)$\}$
					\If{$A'\in I$}
						\State Set $A := A'$
						\For{each $u\to v$ in path $t$}
							\State Replace $u$ with $v$ in its current partition
						\EndFor
					\Else
						\For{each unmarked $u\in A$}
							\If{$A'\setminus\{u\}\in I$}
								\State $Q$.enqueue($u\to t$)
								\State Mark $u$
							\EndIf
						\EndFor
					\EndIf
				\EndFor
			\EndWhile
				\State If no path was found, set $P := P\cup\{s\}$
	\EndFunction
\end{algorithmic}
\end{algorithm}

Algorithm~\ref{alg:partition} details the algorithm for adding an element $s$ to a partition $P$ of matroid $(S, I)$ -- the full algorithm follows by iteratively adding each element of the ground set to the initially empty partition, and correctness follows from the property that if $P$ is minimal for $(S, I)$, then the new partition $P'$ is minimal for $(S\cup\{s\}, I)$. Since adding a single element to a partition of $i$ elements takes $O\left((2i)^2\cdot n^3+2i\right)$ time, and $\sum_i^{|S|}i^2=\frac{|S|^3}{3}+\frac{|S|^2}{2}+\frac{|S|}{6}$, we see that Algorithm~\ref{alg:partition} can be used to partition the full set $S$ in $O(|S|^3\cdot n^3)$ time.

\section{Extending to a universal gate set}
\label{sec:uni}

In the previous sections we described a method for re-synthesizing $\{CNOT, T\}$ circuits that removes redundant $T$-gates by computing the total phase and parallelizing the phase gates through matroid partitioning. However, the usefulness of such an algorithm on its own is marred by the fact that $\{CNOT, T\}$ circuits are a restricted class of quantum circuits -- in particular, they do not create superpositions or interference between basis states. To apply the optimization procedure to more complex quantum circuits, we extend these ideas to deal with the universal gate set $\{H, CNOT, T\}.$

We recall that a Hadamard gate $H$ has the effect $$H:\ket{x_1}\mapsto\frac{1}{\sqrt{2}}\sum_{x_2\in \F}\omega^{4\cdot x_1 \cdot x_2}\ket{x_2}$$ on a basis state $x_1\in\F$. We call $x_2$ a {\it path variable}, following in the tradition of similar representations of quantum circuits \cite{dhmhno05, r09}, called {\it sum over path} representations. We note that the state $\ket{x_1}$ effectively ceases to exist, having been replaced with $\ket{x_2}$. Circuits over $\{H, CNOT, T\}$ can then be described by a phase polynomial and set of linear Boolean outputs, similar to Lemma~\ref{lem:Tpar}.

\begin{lemma}
\label{lem:sop}
If a unitary $U\in U(2^n)$ is exactly implementable by an $n$-qubit circuit over $\{H, CNOT, T\}$ with $k$ $H$ gates, then for $x_1x_2...x_n\in\F^n$,
$$U|x_1x_2...x_n\rangle = \frac{1}{\sqrt{2^k}}\hspace{-2em}\sum_{\qquad x_{n+1}...x_{n+k}\in\F^k}\hspace{-2em}\omega^{p(x_1,x_2, ..., x_{n+k})}\ket{y_1y_2...y_n}$$
where $y_i=h_i(x_1, x_2, \dots, x_{n+k})$ and 
\begin{align*}
p(x_1, &x_2, ..., x_{n+k})= \sum_{i=1}^lc_i\!\cdot\! f_i(x_1, ..., x_{n+k})+4\cdot\sum_{i=1}^k x_{n+i}\!\cdot\! g_i(x_1, ..., x_{n+k})
\end{align*}
for some linear Boolean functions $h_i, f_i, g_i$ and coefficients $c_i\in\mathbb{Z}_8$. The $k$ path variables \\ $x_{n+1},...,$ $x_{n+k}$ result from the application of Hadamard gates.
\end{lemma}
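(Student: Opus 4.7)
The plan is to prove Lemma~\ref{lem:sop} by induction on the number of gates in the circuit, extending the analogous argument used for Lemma~\ref{lem:Tpar}. The base case of the empty circuit is immediate: $U=I$, there are no path variables, and we can take $y_i=x_i$ with all phase terms empty. For the inductive step, assume we have already written the action of the first $N$ gates in the form of the lemma, i.e.\ the state $U|x_1\dots x_n\rangle$ equals a normalized sum over existing path variables $x_{n+1},\dots,x_{n+k}$ of $\omega^{p}\,\ket{h_1\dots h_n}$ with $p$ of the prescribed shape. I would then extend the computation by one additional gate and show the form is preserved.

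Next I would handle each of the three gate types in turn. If a $CNOT$ gate is appended with control $a$ and target $b$, then $\ket{\dots h_a\dots h_b\dots}\mapsto\ket{\dots h_a\dots h_a\oplus h_b\dots}$, which only updates $h_b$ to the linear Boolean function $h_a\oplus h_b$; neither the phase polynomial nor the set of path variables changes, so the form is preserved. If a $T$ gate is appended on qubit $b$, the new state acquires a phase $\omega^{h_b(x_1,\dots,x_{n+k})}$, which adds a term $1\cdot h_b$ to the $\sum c_i f_i$ part of $p$; again the form is preserved, and the $c_i\in\mathbb{Z}_8$ restriction is maintained because iterated $T$'s on equivalent functions just accumulate coefficients modulo $8$. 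If an $H$ gate is appended on qubit $b$, then I introduce a fresh path variable $x_{n+k+1}$ and use the Hadamard's action $\ket{h_b}\mapsto \tfrac{1}{\sqrt2}\sum_{x_{n+k+1}\in\F}\omega^{4\cdot h_b\cdot x_{n+k+1}}\ket{x_{n+k+1}}$; this simultaneously replaces $h_b$ by the identity function $x_{n+k+1}$ in the output label, multiplies the normalization by $1/\sqrt2$, and appends a new summand $4\cdot x_{n+k+1}\cdot h_b$ to the phase. Taking $g_{k+1}:=h_b$ (the linear Boolean function on the target immediately before the Hadamard), this is exactly the second type of term allowed in $p$.

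Finally I would observe that all intermediate quantities remain linear Boolean functions of the current variable set $\{x_1,\dots,x_{n+k}\}$ throughout the induction: $CNOT$ produces XOR combinations, $T$ does not alter the $h_i$, and $H$ both introduces a new free variable and resets one $h_i$ to that variable, keeping everything linear. Therefore, after processing all gates we obtain exactly the form claimed in the lemma, with $k$ equal to the number of Hadamard gates in the circuit.

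The main subtlety — and the place I would be most careful — is bookkeeping at the Hadamard step: one must verify that the residual phase contribution from the Hadamard, namely $4\cdot x_{n+k+1}\cdot h_b$, fits into the designated second sum with $g_{k+1}$ a linear Boolean function of all variables including the previously-introduced path variables, and that replacing $h_b$ by a freshly introduced variable does not break the linearity of the subsequent $h_i$'s or of the $f_i$, $g_i$ that appear in later inductive steps. Since substitution of a variable by another variable preserves linearity over $\F$, this is fine, but it is the point where the proof does real work beyond the $\{CNOT,T\}$ case of Lemma~\ref{lem:Tpar}.
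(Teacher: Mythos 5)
Your induction on the gate sequence is correct and is exactly the argument the paper intends: its proof of Lemma~\ref{lem:sop} is the one-line remark that the claim ``follows from the effect of each gate on the computational basis states,'' and your case analysis for $CNOT$, $T$, and $H$ (with the fresh path variable and the $4\cdot x_{n+k+1}\cdot h_b$ phase term) simply spells out those effects. No gap; same approach, just written out in full.
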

\begin{proof}
Follows from the effect of each gate on the computational basis states.
\end{proof}

It can be noted that unlike Lemma~\ref{lem:Tpar}, the converse is not true -- some computations of the form in Lemma~\ref{lem:sop} do not define unitary transformations.

Synthesis of a circuit based on the above representation is more challenging. In particular, the Hadamard gates in effect destroy values and create new ones, changing the state space to some new subspace of $\F^{n+k}$, possibly with greater dimension if the destroyed value was linearly dependent with the other qubits (e.g. an initialized ancilla). Each linear Boolean function is likewise computable only in some of the possible state spaces. To re-synthesize we then need to apply Hadamard gates in such a way as to be able to pick up each phase factor and end in the state space $\spn\left(\{y_1,y_2,...,y_n\}\right)$.

Since the application of a Hadamard gate changes the state space, the state of each qubit must be chosen so that the qubits span a suitable space afterwards. As an illustration consider the transformation $$\ket{x_1x_2}\mapsto\frac{1}{\sqrt{2}}\sum_{x_3\in\F}\omega^{4\cdot x_3\cdot x_2}\ket{(x_1\oplus x_2)x_3}.$$ We could achieve the correct phase by applying a Hadamard gate on the second qubit first, but the resulting state would be $\ket{x_1x_3}$, from which we cannot directly construct the output state $\ket{(x_1\oplus x_2)x_3}$. 
The simplest way to choose a suitable state for each qubit before applying a Hadamard gate is to use the qubit's state in the original circuit, though there may be other ways of computing such states. During re-synthesis we then transform the state to match the state in the original circuit before a particular Hadamard gate is applied.

In this sense, the Hadamard gates are fixed by the original circuit and the re-synthesis process needs to place the remaining terms of the phase (i.e. $c_i\cdot f_i(x_1, x_2, ..., x_{n+k})$ in between them. One approach is to use the greedy nature of Algorithm~\ref{alg:partition} to maintain a partition of those functions that are computable in the current state space of the circuit. Specifically, we iterate through the Hadamard gates and for each one we partition any elements that are in the new state space -- this step relies on the fact that Algorithm~\ref{alg:partition} is greedy to avoid having to repartition the elements that were already in the old state space. For any block in the partition containing functions that will not be computable after the next Hadamard gate, we remove the block and synthesize a $\{CNOT, T\}$ circuit applying those phases. A more detailed description is given in Section~\ref{sec:tpar}.

While this method is heuristic, we note that the partition is always minimal for the set of currently computable functions. In particular, Algorithm~\ref{alg:partition} produces a minimal partition when given a minimal partition, and removing blocks from a partition does not affect minimality -- given a subset $P'$ of a minimal partition $P$, if there existed a partition $P_0$ of the elements in $P'$ such that $|P_0|<|P'|$, then $P_0\cup P\setminus P'$ is a partition of the elements in $P$ into fewer sets.

One problem arises in that the dimension of the state space may increase in the next subcircuit (e.g. if the Hadamard is applied to an ancilla qubit). In this case, the independence condition~(\ref{eq:indcond}) of the matroid changes, and previous partitions may now be invalid under the new inequality. However, as a trivial consequence of the fact that the dimension increases by at most 1, a partition that is no longer independent can be modified to satisfy it by removing exactly one linearly dependent element. Furthermore, if all partitions are modified to satisfy the new independence condition in this way, the new partition is minimal and the elements that were removed can be repartitioned by Algorithm~\ref{alg:partition}.

\begin{lemma}
\label{lem:minimality}
Given a subspace $V$ of $\F^{n}$ with $\dim(V)=m$ and a set of linear Boolean functions $S\subseteq V^*$, let $$I_i=\{A\subseteq S | i - \rk\left(A\right) \leq n - |A|\}.$$

If $P$ is a minimal partition of $(S, I_m)$, then the partition $P'$ defined by removing one linearly dependent element from every $A\in P$ if $A\notin I_{m+1}$ is a minimal partition of $(S', I_{m+1})$, where $S'$ is the set of elements partitioned by $P'$.
\end{lemma}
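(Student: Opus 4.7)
The proof splits naturally into two claims: that $P'$ is a valid $I_{m+1}$-partition of $S'$, and that it has the minimum possible number of blocks. For the first claim, every $A\in P$ satisfies $|A|-\rk(A)\le n-m$ by hypothesis. If $A\in I_{m+1}$ the block is kept unchanged. Otherwise, combining $A\in I_m$ with $A\notin I_{m+1}$ forces $|A|-\rk(A)=n-m$, so in particular $|A|>\rk(A)$ and there exists a linearly dependent element $e\in A$; removing $e$ preserves the rank and decreases the size by one, giving $|A\setminus\{e\}|-\rk(A\setminus\{e\})=n-m-1$, placing $A\setminus\{e\}$ in $I_{m+1}$. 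Hence $P'$ is a partition of $S'=S\setminus R$ into $I_{m+1}$-independent sets, where $R$ collects the removed elements, and $|P'|=|P|$.

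For minimality I would proceed by contradiction: assume an $I_{m+1}$-partition $Q$ of $S'$ with $|Q|<|P|$ exists, and aim to extend $Q$ to an $I_m$-partition of $S$ of the same size $|Q|<|P|$, which would contradict the minimality of $P$. The key enabling observation is that for any $B\in Q$ (so $|B|-\rk(B)\le n-m-1$) and any $e\in R$, one has $|B\cup\{e\}|-\rk(B\cup\{e\})\le (|B|-\rk(B))+1\le n-m$, so $B\cup\{e\}\in I_m$. The strategy is therefore to redistribute the elements of $R$ one by one into the blocks of $Q$ while preserving the $I_m$ condition, never opening a new block.

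The principal obstacle is that this single-element observation does not directly license multiple additions to the same block: after an element has been absorbed, the block lies in $I_m$ but may exit $I_{m+1}$, so subsequent additions to the same block must be checked against the $I_m$ defect bound directly. Concretely, each $B\in Q$ has a budget $\gamma(B)=n-m-(|B|-\rk(B))\ge 1$ for absorbing elements that lie in $\spn(B)$, while absorptions outside $\spn(B)$ do not consume budget. The delicate step is a counting or exchange argument showing that the combined budget $\sum_{B\in Q}\gamma(B)$, together with rank gains from additions outside the current spans, is sufficient to accommodate all of $R$; here one exploits that each $e\in R$ lies in $\spn(A_e\setminus\{e\})\subseteq\spn(S')$ for its original block $A_e$, and that each such $A_e$ was tight in the sense $|A_e|-\rk(A_e)=n-m$. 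Completing this accounting yields an $I_m$-partition of $S$ strictly smaller than $P$, delivering the desired contradiction.
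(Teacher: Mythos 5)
Your first half --- that $P'$ is a legitimate $I_{m+1}$-partition of $S'$ with $|P'|=|P|$, via the observation that $A\in I_m\setminus I_{m+1}$ forces $|A|-\rk(A)=n-m$ so that deleting one dependent element lands exactly in $I_{m+1}$ --- is correct, and your ``key enabling observation'' (for $B\in Q$ with $m+1-\rk(B)\le n-|B|$ and any $s\in S\setminus S'$ one gets $B\cup\{s\}\in I_m$) is precisely the first step of the paper's minimality argument. But from there you commit to absorbing \emph{all} of $R=S\setminus S'$ into the blocks of $Q$ ``never opening a new block,'' correctly note that repeated insertions into the same block are not licensed by the single-element observation, and then leave the resulting budget/exchange accounting as an acknowledged open step. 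That is a genuine gap, not a routine detail: since $|Q|$ may be much smaller than $|R|$, some block would have to absorb many elements of $R$, and nothing guarantees it can; I see no reason your strategy succeeds in general.

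The paper closes the argument by \emph{allowing} new blocks and counting, using two facts you never invoke. First, $|R|\le|P|$, because at most one element is removed from each block of $P$. Second, for $m<n$ any two elements of $S$ form an $I_m$-independent set (a pair has rank at least $1$, and $m-1\le n-2$), so any $k\ge 2$ leftover elements can be partitioned into at most $k-1$ new $I_m$-blocks. Hence one augments each of the $|Q|$ blocks with a single element of $R$ and packs the remaining $|R|-|Q|$ elements into at most $|R|-|Q|-1$ fresh blocks, obtaining an $I_m$-partition of $S$ of size at most $|R|-1\le|P|-1<|P|$, contradicting minimality of $P$. This is the idea your proposal is missing: you do not need every element of $R$ to enter $Q$, only enough of them that the leftover can be boxed cheaply, and the bound $|R|\le|P|$ is what converts the count into a contradiction.
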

\begin{proof} Suppose there exists some partition $P_0$ of $(S', I_{m+1})$ with $|P_0|<|P'|$. We then see that one element of $S\setminus S'$ can be added to any $A\in P_0$ to give a set in $I_m$. In particular, consider any $A\in P_0$. Since $m+1 - \rk(A) \leq n - |A|$ we see that $$m - \rk(A) \leq n - |A| - 1 = n - |A\cup\{s\}|$$ for any $s\in S\setminus S'$. We also note that $n - |A\cup\{s\}|\geq 0$ as required, since any subset of $S$ has rank at most $m$, so for any $A\in I_{m+1}$, $d+1 - \rk(A)>0$ and thus $|A|< n$. Therefore, for any $A\in P_0, s\in S\setminus S'$ we have $A\cup\{s\} \in I_m$.

Next we note that for any $T\subseteq S$ there exists a partition of $(T, I_m)$ with size at most $|T|-1$. This is a simple result of the fact that $m<n$, as any subset $A\subseteq T$ of size $2$ has rank at least $1$, so $$m - \rk(A)\leq m - 1 \leq n - 2 = n - |A|.$$ Additionally, any size $1$ subset of $T$ is trivially independent under $I_m$.

Thus, since we can add one element to every partition in $P_0$, and we can partition the remaining $|S\setminus S'| - |P_0|$ elements into at most $|S\setminus S'| - |P_0| - 1$ partitions, we see that there exists a partition of $(S, I_m)$ with size at most $$|P_0| + (|S\setminus S'| - |P_0| - 1) = |S\setminus S'| - 1.$$ Since $|S\setminus S'| - 1 < |P|$ we obtain a contradiction.
\end{proof}

\section{The $T$par algorithm}
\label{sec:tpar}

In the last section we described a heuristic for optimizing $T$-count and depth over the gate set $\{H, CNOT, T\}$. In this section, we present the concrete algorithm, $T$par, and enlarge the gate set to include the Pauli gates $$X:=\begin{pmatrix} 0 & 1 \\ 1 & 0 \end{pmatrix}, Y:=\begin{pmatrix} 0 & i \\ -i & 0 \end{pmatrix}, Z:=\begin{pmatrix} 1 & 0 \\ 0 & -1 \end{pmatrix}.$$ We ignore the irrelevant global phase $i$ in $Y=iXY$, though it can be recovered by applying $XPXP=iI$ to any qubit. Appendix~\ref{app:example} gives a demonstration of the $T$par algorithm on a simple circuit.

Recall that in the computational basis, the input space of a circuit with $n$ qubits, $n-m$ ancillae and $k$ Hadamard gates is a dimension $m$ subspace $V$ of $\F^{n+k}$. We represent the state of a qubit as a vector in the dual space of $\F^{n+k}$, $\F^{(n+k)*}$, along with a Boolean value $b$, called the parity, which is used to record bit flips. Specifically, we note that $X:\ket{x}\mapsto\ket{1\oplus x}$, so we can model bit flips with a single parity bit. We denote the set of states $\F\times\F^{(n+k)*}$ as $\mathcal{S}$.

Given a Clifford + $T$ circuit $C$, written as a sequence of gates over $\{X, Y, Z, P, P^\dagger, H, CNOT, T,$ $T^\dagger\}$, we first compute a triple $\langle S, Q, H\rangle\in\mathcal{D}$ representing $C$. $S=\{(c, f) | c\in\mathbb{Z}_8, f\in\mathcal{S}\}$ stores the $T^k$ phase factors as linear Boolean functions with parity and multiplicity, $Q=(g_1,g_2,...,g_n)\in\mathcal{S}^n$ tracks the state of each qubit, and $H=(h_1,h_2,...h_k)$ gives a sequence of Hadamard gates where each entry $h_i$ stores the input and output states, $h_i.Q_I$ and $h_i.Q_O$ respectively. We define the initial state of the circuit as $Q_0=\left((0, x_1),(0, x_2),...,(0, x_m), (0, 0), ..., (0, 0)\right)$, which is understood as the state $\ket{x_1x_2...x_m}\ket{0}^{\otimes n-m}$. To compute $\langle S, Q, H\rangle$, we use the function $\llbracket U \rrbracket:\mathcal{D}\to\mathcal{D}$ (Figure~\ref{fig:gates}) to sequentially update the triple $\langle S, Q, H\rangle$ for each gate $U$ in the circuit, starting from the initial value $\langle\emptyset, Q_0, \emptyset\rangle$.

\begin{figure}[h]
\begin{align*}
	&\llbracket X_i \rrbracket \langle S, Q, H\rangle = \langle S, (g_1,...,g_{i-1},1\oplus g_i,...,g_n), H\rangle \\
	&\llbracket Z_i \rrbracket \langle S, Q, H\rangle = \langle S\uplus\{(4, g_i)\}, Q, H\rangle \\
	&\llbracket Y_i \rrbracket \langle S, Q, H\rangle = \langle S\uplus\{(4, g_i)\}, Q', H\rangle 
		\hspace{3.2em}\textrm{where  $Q'=(g_1,...,g_{i-1},1\oplus g_i,...,g_n)$} \\
	&\llbracket P_i \rrbracket \langle S, Q, H\rangle) = \langle S\uplus\{(2, g_i)\}, Q, H\rangle \\
	&\llbracket P_i^\dagger \rrbracket \langle S, Q, H\rangle) = \langle S\uplus\{(6, g_i)\}, Q, H\rangle \\
	&\llbracket H_i \rrbracket \langle S, Q, (h_1,h_2,...,h_j)\rangle) = \langle S, Q', H'\rangle\;\; 
		\hspace{1.5em}\textrm{where $Q'=(g_1,...,g_{i-1},(0, x_{j+i}),...,g_n)$} \\
		&\hspace{22em}\textrm{$H'=(h_1,h_2,...,h_j,\{Q_I=Q, Q_O=Q'\})$} \\
	&\llbracket CNOT_{(i, j)} \rrbracket \langle S, Q, H\rangle) = \langle S, Q', H\rangle 
		\hspace{3.9em}\textrm{where $Q'=(g_1,...,g_{j-1}, g_j\oplus g_i,...,g_n)$} \\
	&\llbracket T_i \rrbracket \langle S, Q, H\rangle) = \langle S\uplus\{(1, g_i)\}, Q, H\rangle \\
	&\llbracket T_i^\dagger \rrbracket \langle S, Q, H\rangle) = \langle S\uplus\{(7, g_i)\}, Q, H\rangle \\
\end{align*}
\vspace{-2em}
\caption{Semantic function $\llbracket\cdot\rrbracket$. We define $S\uplus T$ as the union of $S$ and $T$ where any $f$ such that $(c_1, f)\in S, (c_2, f)\in T$ is given coefficient $c_1+c_2\mod 8$. $U_i$ denotes the gate $U$ applied to qubit $i$ and $CNOT_{(i, j)}$ specifies $i$ as the control qubit and $j$ as the target.}
\label{fig:gates}
\end{figure}

The $T$par algorithm (Algorithm~\ref{alg:Tpar}) proceeds as follows: after computing $\langle S, Q, H\rangle$, we synthesize a new circuit by iterating through the Hadamard gates in $H$ while updating a partition $P$ of the functions of $S$ that are computable in the current subcircuit. In particular, we divide $S$ into $S_P$ and $S_{-P},$ where $S_P$ are the already partitioned elements and $S_{-P}$ are those not already partitioned. For a given $h_i=\{Q_I, O_O\}$, for every $(c, f)\in S_{-P}$ we check whether $f\in \spn(Q_I)$. If so, we add $(c, f)$ to the current partition using Algorithm~\ref{alg:partition} with the independence relation $A\subseteq S\in I$ if and only if  $\rk (Q_I) - \rk(A)\leq n - |A|.$ After partitioning, we update $S_P$ and $S_{-P}$ accordingly. The tests for inclusion of each function $f$ in $\spn(h_i.Q_I)$ requires $O\left(|S_{-P}|\cdot(n+k)^3\right)$ time, and we can loosely bound the partitioning time as the time to partition the entire set $S$ using Algorithm~\ref{alg:partition}, $O(\left(|S|^3\cdot(n+k)^3\right)$; since $|S_{-P}|\leq |S|$ the entire step thus takes $O\left(|S|^3\cdot(n+k)^3\right)$ time. A tighter analysis is possible, though we omit it as the algorithm is heuristic in nature.

We next iterate through $P$ and for each block $A\in P$, if $f\notin \spn(Q_O)$ for some $(c, f)\in A$ we remove $A$ from the partition and synthesize a circuit computing the relevant phase factors. While we defer the discussion of the synthesis procedure for now, we remark that it requires $O\left((n+k)^3\right)$ time. Otherwise, if $A$ is no longer an independent set under the new independence relation $A\subseteq S\in I$ if and only if $ \rk (Q_O) - \rk(A)\leq n - |A|,$ we remove a linearly dependent element from $A$ and update $S_P$ and $S_{-P}$ so that the deleted element will be re-partitioned on the next iteration. As $\rk(A)$ and a linearly dependent element can both be found with one application of Gaussian elimination, this step also requires $O\left((n+k)^3\right)$ time, and so the entire loop takes $O\left(|P|\cdot(n+k)^3\right)$ time.

Finally, we synthesize a circuit applying the Hadamard gate in $O\left((n+k)^3\right)$ time, and repeat the entire process for the next Hadamard. The entire algorithm, shown in Algorithm~\ref{alg:Tpar}, thus runs in time $$O\left(|C|\cdot n+k\cdot(n+k)^3\cdot\left(|S|^3+|P| + 1\right)\right).$$ As $|C|\cdot n$ is in most cases negligible compared to the $k\cdot(n+k)^3\cdot|S|^3$ factor, and $|P|\leq|S|$, we describe the runtime as simply $O\left(k\cdot|S|^3\cdot (n+k)^3\right)$.  Moreover, it should be noted that if no repartitioning is done, the runtime is bounded by $O\left(|S|^3\cdot (n+k)^3\right)$, as each element is partitioned exactly once, rather than the worst case of $k$ times in general.

\begin{algorithm}
\caption{T-parallelization algorithm}
\label{alg:Tpar}
\centering
\begin{algorithmic}
	\Function{Tpar}{Clifford + $T$ circuit $C$}
		\State $C':=\emptyset$
		\State $\langle S, Q, H\rangle:=\llbracket{C_{|C|}}\rrbracket\cdots\llbracket{C_1}\rrbracket\langle\emptyset, Q_0, \emptyset\rangle$
		\State Set $S_P:=\emptyset;\; S_{-P}:=S;\; P:=\emptyset$
		\For{each $1\leq i\leq k$}
			\State $I:=\{A\subseteq S | \rk(h_i.Q_I) -\rk(A)\leq n - |A|\}$
			\For{each $(c, f)\in S_{-P}$}
				\If{$f\in\spn(h_i.Q_I)$}
					\State $P:=$Partition$\left((c, f), P, (S_P, I)\right)$
					\State $S_P:= S_P \cup \{(c, f)\};\; S_{-P}:= S_{-P} \setminus \{(c, f)\}$ 
				\EndIf
			\EndFor
			\For{each $A\in P$}
				\If{$i = k$ or $\exists (c, f)\in A$ s.t. $f\notin\spn(h_i.Q_O)$ \\ \qquad\qquad}
					\State Append($C'$, Synthesize($A, h_i.Q_I, h_i.Q_I$))
					\State $P:=P\setminus A$
				\ElsIf{$\rk(h_i.Q_O) - \rk(A) > n - |A|$}
					\State Choose $(c, f)\in A$ such that $\rk(A) = \rk(A\setminus\{(c, f)\})$
					\State $A:=A\setminus \{(c, f)\}$
					\State $S_P:=S_P\setminus\{(c, f)\}; \;S_{-P}:=S_{-P}\cup\{(c, f)\}$
				\EndIf
			\EndFor
			\State Append($C'$, Synthesize($\emptyset, h_i.Q_I, h_i.Q_O$))
		\EndFor
		\State \Return $C'$
	\EndFunction
	\vspace{1em}

	\Function{Synthesize}{$A, Q_I, Q_O$}
			\State /* Synthesize a circuit implementing $U:\ket{Q_I}\mapsto\omega^{\sum_{(c, (b, f))\in A}c\cdot b\oplus f(x_1,x_2,...,x_{n+k})}\ket{Q_O}$ */
			\State Compute $A'\supseteq A$ s.t. $\rk(A') = \rk(Q_I), |A'|=n$
			\State Synthesize $\{CNOT, X\}$ circuit $C_1:\ket{Q_I}\mapsto \ket{A'}$
			\State Synthesize $\{Z, P, T\}$ circuit $C_2:\ket{A'}\mapsto \omega^{\sum_{(c,b,f)\in A'}c\cdot b\oplus f(x_1,x_2,...,x_{n+k})}\ket{A'}$
			\State Synthesize $\{CNOT, X, H\}$ circuit $C_3:\ket{A'}\mapsto \ket{Q_O}$
			\State Return $C_1C_2C_3$
	\EndFunction
\end{algorithmic}
\end{algorithm}

\subsection{Synthesizing partitions}

We now describe the general synthesis procedure, S{\footnotesize YNTHESIZE}($A, Q_I, Q_O$), from Algorithm~\ref{alg:Tpar}. The procedure synthesizes a circuit with inputs $Q_I$ and outputs $Q_O$ applying the phases given by a computable partition $A$ of linear Boolean functions.

The algorithm proceeds by first extending $A$ with $n - |A|$ linear Boolean functions to form a set $A'$ with rank equal to $\rk(Q_I)$ -- this is accomplished by using row operations to reduce $A$ to a subset of $Q_I$, then adding the vectors in $Q_I\setminus A$. Next we synthesize a circuit computing $A'$ by reducing $Q_I$ and $A'$ to the same basis using Gaussian elimination in $O((n+k)^3)$ time, where addition of two rows corresponds to the application of a $CNOT$ gate, and parity changes correspond to $X$ gates. The circuit reducing $Q_I$ to this basis is applied forwards, while the circuit reducing $A'$ is applied in reverse, giving a circuit mapping $\ket{Q_I}\mapsto\ket{A'}$. The phase factors are applied by constructing a combination of $T, P$ and $Z$ gates, corresponding to the relevant coefficients, then the circuit mapping $\ket{Q_I}$ to $\ket{A'}$ is reversed to compute $\ket{Q_I}$. In the case when $Q_O\neq Q_I$, the corresponding Hadamard gate is applied to compute the output $\ket{Q_O}$.

As alluded to before, we now see that the synthesis procedure has time complexity $O\left((n+k)^3\right)$ since it requires a constant number (three) of applications of Gaussian elimination.  Moreover, the $T$-depth of the resulting circuit is 1.

As an important practical issue, Gaussian elimination based synthesis produces linear reversible circuits that are non-optimal in terms of the number of gates or depth, resulting in a potential increase in the number of $CNOT$ gates after re-synthesizing, as compared to the original design.  While our focus was on the optimization of $T$ gates, there exist algorithms, \cite{m07, pmh08}, that produce more efficient circuits for linear reversible functions.  Specifically, \cite{pmh08} provides an algorithm to synthesize linear reversible circuits with $\Theta(n^2/\log(n))$ gates, and \cite{m07} reports an O($n$)-depth algorithm. More recently, \cite{km13} described an optimization procedure for stabilizer circuits that could be applied afterward to further optimize linear reversible and Clifford group subcircuits.  In practical implementations an advanced linear reversible synthesis algorithm should be used.  Compared to the optimization of $T$-depth, we considered the optimization of the linear reversible circuit stages to be a second order improvement and did not pursue it in this work.

\section{Results}
\label{sec:res}
\vspace{1mm}

We implemented Algorithm~\ref{alg:Tpar} in C++\footnote{C++ Source code available at \href{http://code.google.com/p/tpar/}{http://code.google.com/p/tpar/}.} and used it to optimize various quantum circuits, specifically arithmetic and reversible ones, from the literature.  Individual circuits were written in the standard fault-tolerant universal gate set $\{X, Y, Z, H, P, P^\dagger, CNOT, T, T^\dagger\}$, using the decompositions found in \cite{ammr13} where applicable.  As most arithmetic circuits are dominated by Toffoli gates, we used the lowest $T$-depth implementation of the Toffoli without ancillae known \cite{ammr13}.

Results are reported in Tables~\ref{tab:benchmarks1} and \ref{tab:benchmarks2}.  They were generated in Debian Linux running on a quad-core 64-bit Intel Core i7 2.40GHz processor and 8 GB RAM. Table~\ref{tab:benchmarks1} gives gate counts for the circuits before and after optimization.  Table~\ref{tab:benchmarks2} shows $T$-depths before and after optimization using either 0, $n$, or $\infty$ ancillae\footnote{In order to give an example of the trade-off between number of ancillae and the $T$-depth for some non-constant number of ancillae, we arbitrarily chose to illustrate results with $n$ ancillae. Our implementation allows any other computable value.} where $n$ denotes the original number of qubits in the circuit.  The $T$-depth for each circuit before optimization was computed by parallelizing the $T$-gates and Toffoli gates by hand, and writing each group of parallel Toffoli gates in $T$-depth 3.

With no extra ancillae added, all the tested benchmarks show significant reductions in terms of both $T$-count and $T$-depth, with average reductions of $39.9\%$ and $54.3\%$ respectively. The algorithm is particularly effective in cases where adjacent Toffoli gates share either controls or targets, as many of the phases cancel -- each of the GF($2^m$) multipliers share this structure, and as a result show large reductions in $T$-count and $T$-depth. In fact, the $T$par algorithm will parallelize any GF($2^m$) multiplication circuit to $T$-depth 2 when given sufficiently many ancillae, by noting that each such circuit contains two stages of Toffoli gates that result in one $\{CNOT, T\}$ stage each after removing trivial identities. By comparison, since the Toffoli gates {\it cannot} be all written in parallel, the minimum $T$-depth achievable using $T$-depth 1 Toffoli implementations \cite{s13} is $12(m-1)$. Those circuits that mix controls and targets between adjacent Toffoli gates are less affected by the optimization, e.g., CSUM-MUX$_9$, as the Hadamard gates create barriers to $T$ parallelization.

The runtimes show that algorithm scales well to large circuits, the largest tested circuit having 192 qubits, 28672 $T$ gates and 8192 Hadamard gates. This stands in contrast to most previous efforts to optimize quantum circuits, which have generally been limited in usefulness to a few qubits and a small number of gates. While the inclusion of Hadamard gates adds significant complexity to the algorithm, it has actually reduced runtime of the algorithm compared to experiments where $T$par was applied only to $\{CNOT, T\}$ subcircuits which is likely a result of the greater $T$-count reductions. As a result, $T$par appears to be an effective heuristic algorithm for the large-scale optimization of fault-tolerant circuits.

We also tested our algorithm's ability to make use of ancillae to optimize $T$-depth (Table~\ref{tab:benchmarks2}). For each of the benchmark circuits, we applied our algorithm with an added $n$ ancillae, where the original circuit contained $n$ qubits. We also report the minimum $T$-depth achievable for each circuit using our algorithm. It can be noted that our algorithm usually decreases in running time when ancillae are added, due to the reduced number of partitions and thus faster matroid partitioning. Specifically, when there are many ancillae, for the majority of the time when an item $s$ is partitioned it can be directly added to one of the partitions in time $O(|P|\cdot(n+k)^3)$. The algorithm is thus very flexible, and the experimental data illustrates a great potential for exploring space-time trade-off in quantum circuits. 

\begin{table}[t]
\scriptsize
\centering
\begin{tabular}{|l|c|c|c|c||r|c|c|c|c|}
\hline
Benchmark&$N$&$x_C$&$x_T$&$x_g$&$x'_C$&$x'_T$&$x'_g$&Time (s)&$T$-count reduction (\%) \\ \hline\hline
Mod 5$_{4}$ \cite{m11}&5&32&28&9&48&16&12&0.000&42.9\\ \hline
VBE-Adder$_{3}$ \cite{vbe96}&10&80&70&20&114&24&23&0.001&65.7\\ \hline
CSLA-MUX$_{3}$ \cite{vi05}&15&90&70&20&425&62&21&0.001&11.4\\ \hline
CSUM-MUX$_{9}$ \cite{vi05}&30&196&196&84&411&112&70&0.005&42.9\\ \hline
QCLA-Com$_{7}$ \cite{dkrs06}&24&215&203&73&583&95&73&0.003&53.2\\ \hline
QCLA-Mod$_{7}$ \cite{dkrs06}&26&441&413&132&1185&249&138&0.008&39.7\\ \hline
QCLA-Adder$_{10}$ \cite{dkrs06}&36&273&238&86&737&162&73&0.018&31.9\\ \hline
Adder$_{8}$ \cite{ttk10}&24&466&399&126&920&215&153&0.007&46.1\\ \hline
RC-Adder$_{6}$ \cite{cdkp04}&14&104&77&30&234&63&29&0.001&18.2\\ \hline
Mod-Red$_{21}$ \cite{ms12}&11&121&119&58&301&73&51&0.001&38.7\\ \hline
Mod-Mult$_{55}$ \cite{ms12}&9&55&49&36&166&37&20&0.000&24.5\\ \hline
$\Lambda_3(X)$ -- \cite{bbcdmsssw95}&5&28&28&8&54&16&12&0.000&42.9\\ \hline
\hspace{3.05em} -- \cite{nc00}&5&21&21&6&41&15&9&0.000&28.6\\ \hline
$\Lambda_4(X)$ -- \cite{bbcdmsssw95}&7&56&56&16&90&28&23&0.000&50.0\\ \hline
\hspace{3.05em} -- \cite{nc00}&7&35&35&10&63&23&16&0.000&34.3\\ \hline
$\Lambda_5(X)$ -- \cite{bbcdmsssw95}&9&84&84&24&132&40&34&0.001&52.4\\ \hline
\hspace{3.05em} -- \cite{nc00}&9&49&49&14&94&31&23&0.000&36.7\\ \hline
$\Lambda_{10}(X)$ -- \cite{bbcdmsssw95}&19&224&224&64&328&100&89&0.004&55.4\\ \hline
\hspace{3.5em} -- \cite{nc00}&19&119&119&34&232&71&58&0.002&40.3\\ \hline
GF($2^4$)-Mult \cite{mmcp09}&12&115&112&32&324&68&27&0.001&39.3\\ \hline
GF($2^5$)-Mult \cite{mmcp09}&15&179&175&50&535&111&36&0.004&36.6\\ \hline
GF($2^6$)-Mult \cite{mmcp09}&18&257&252&72&649&150&43&0.008&40.5\\ \hline
GF($2^7$)-Mult \cite{mmcp09}&21&349&343&98&992&217&36&0.031&36.7\\ \hline
GF($2^8$)-Mult \cite{mmcp09}&24&468&448&128&1256&264&40&0.052&41.1\\ \hline
GF($2^9$)-Mult \cite{mmcp09} &27&575&567&162&1701&351&44&0.110&38.1\\ \hline
GF($2^{10}$)-Mult \cite{mmcp09}&30&709&700&200&2176&410&69&0.227&41.4\\ \hline
GF($2^{16}$)-Mult \cite{mmcp09}&48&1856&1792&512&6592&1040&82&5.079&42.0\\ \hline
GF($2^{32}$)-Mult \cite{mmcp09}&96&7291&7168&2048&33269&4128&166&602.577&42.4\\ \hline
GF($2^{64}$)-Mult \cite{mmcp09}&192&28860&28672&8192&180892&16448&334&95447.466&42.6\\ \hline\hline
\multicolumn{9}{ |l| }{Average}&39.9\\ \hline
\multicolumn{9}{ |l| }{Maximum}&65.7\\ \hline
\end{tabular}
\caption{$T$-count benchmarks. We report the gate counts after optimizing circuits with $T$par, using no extra ancillae. $N$ specifies the number of qubits. $x_C$ reports the number of $CNOT$ gates, $x_T$ reports the number of $T$ gates and $x_U$ reports the number of other gates. $x'$ denotes the number of gates after optimization.}
\label{tab:benchmarks1}
\end{table}

\begin{table}[t]
\scriptsize
\centering
\begin{tabular}{|l|c||c|c||r|c|c||r|c|c|}
\hline
Benchmark&$T$-depth&$T$-depth&Red.&$T$-depth&Time&Red.&$T$-depth&Time&Red. \\
&original&0 ancilla&(\%)&$N$ ancilla&(s)&(\%)&$\infty$ ancilla&(s)&(\%) \\ \hline\hline
Mod 5$_{4}$ \cite{m11}&12&6&50.0&3&0.000&75.0&3&0.000&75.0\\ \hline
VBE-Adder$_{3}$ \cite{vbe96}&24&9&62.5&5&0.000&79.2&5&0.000&79.2\\ \hline
CSLA-MUX$_{3}$ \cite{vi05}&21&8&61.9&4&0.004&81.0&4&0.001&81.0\\ \hline
CSUM-MUX$_{9}$ \cite{vi05}&18&9&50.0&4&0.003&77.8&3&0.005&83.3\\ \hline
QCLA-Com$_{7}$ \cite{dkrs06}&27&12&55.6&7&0.003&74.1&7&0.004&74.1\\ \hline
QCLA-Mod$_{7}$ \cite{dkrs06}&57&29&49.1&14&0.008&75.4&14&0.010&75.4\\ \hline
QCLA-Adder$_{10}$ \cite{dkrs06}&24&11&54.2&6&0.005&75.0&6&0.006&75.0\\ \hline
Adder$_{8}$ \cite{ttk10}&69&30&56.5&15&0.007&78.3&15&0.008&78.3\\ \hline
RC-Adder$_{6}$ \cite{cdkp04}&33&22&33.3&11&0.002&66.7&11&0.001&66.7\\ \hline
Mod-Red$_{21}$ \cite{ms12}&48&25&47.9&15&0.002&68.8&15&0.011&68.8\\ \hline
Mod-Mult$_{55}$ \cite{ms12}&15&7&53.3&4&0.000&73.3&4&0.005&73.3\\ \hline
$\Lambda_3(X)$ -- \cite{bbcdmsssw95}&12&8&33.3&4&0.001&66.7&4&0.000&66.7\\ \hline
\hspace{3.05em} -- \cite{nc00}&9&6&33.3&3&0.000&66.7&3&0.001&66.7\\ \hline
$\Lambda_4(X)$ -- \cite{bbcdmsssw95}&24&13&45.8&8&0.001&66.7&8&0.002&66.7\\ \hline
\hspace{3.05em} -- \cite{nc00}&15&9&40.0&5&0.001&66.7&5&0.000&66.7\\ \hline
$\Lambda_5(X)$ -- \cite{bbcdmsssw95}&36&18&50.0&12&0.001&66.7&12&0.004&66.7\\ \hline
\hspace{3.05em} -- \cite{nc00}&21&12&42.9&7&0.001&66.7&7&0.001&66.7\\ \hline
$\Lambda_{10}(X)$ -- \cite{bbcdmsssw95}&96&43&55.2&32&0.005&66.7&32&0.032&66.7\\ \hline
\hspace{3.5em} -- \cite{nc00}&51&27&47.1&17&0.003&66.7&17&0.008&66.7\\ \hline
GF($2^4$)-Mult \cite{mmcp09}&36&6&83.3&4&0.001&88.9&2&0.001&94.4\\ \hline
GF($2^5$)-Mult \cite{mmcp09}&48&9&81.3&5&0.002&89.6&2&0.002&95.8\\ \hline
GF($2^6$)-Mult \cite{mmcp09}&60&9&85.0&5&0.005&91.7&2&0.003&96.7\\ \hline
GF($2^7$)-Mult \cite{mmcp09}&72&12&83.3&7&0.026&90.3&2&0.004&97.2\\ \hline
GF($2^8$)-Mult \cite{mmcp09}&84&13&84.5&7&0.035&91.7&2&0.006&97.6\\ \hline
GF($2^9$)-Mult \cite{mmcp09} &96&15&84.4&7&0.058&92.7&2&0.010&97.9\\ \hline
GF($2^{10}$)-Mult \cite{mmcp09}&108&16&85.2&7&0.157&93.5&2&0.012&98.1\\ \hline
GF($2^{16}$)-Mult \cite{mmcp09}&180&24&86.7&12&3.128&93.3&2&0.061&98.9\\ \hline
GF($2^{32}$)-Mult \cite{mmcp09}&372&47&87.4&23&644.189&93.8&2&1.246&99.5\\ \hline
GF($2^{64}$)-Mult \cite{mmcp09}&756&94&87.6&44&127287.329&94.2&2&78.641&99.7\\ \hline\hline
\multicolumn{3}{ |l| }{Average}&61.1&\multicolumn{2}{ l| }{}&78.5&\multicolumn{2}{ l| }{}&80.7\\ \hline
\multicolumn{3}{ |l| }{Maximum}&87.6&\multicolumn{2}{ l| }{}&94.2&\multicolumn{2}{ l| }{}&99.7\\ \hline
\end{tabular}
\caption{$T$-depth benchmarks. We report the $T$-depth after no optimization (original) and after optimization with 0 (cf. Table~\ref{tab:benchmarks1}), $n$, or unbounded ancillae.}
\label{tab:benchmarks2}
\end{table}

As an important application, our experiments include instances of the multiple control Toffoli gates, $\Lambda_k(X)$, which are widely used in the construction of reversible circuits.  We report the results using two different implementations -- the Barenco {\em et al.} implementation using $k-2$ ancillae with arbitrary initial states  \cite{bbcdmsssw95}, and the Nielsen-Chuang implementation using $k-2$ ancillae initialized in the state $\ket{0}$ \cite{nc00}. In both constructions the ancillae are returned to their initial state. Our optimization of the Barenco {\em et al.} version reduces the $T$-count from $7(4k-8)$ to $3(4k-8) + 4$ and $T$-depth from $3(4k-8)$ to $4k-8$ with unbounded ancillae, in the instances we tried.  Likewise, our optimization of the Nielsen-Chuang implementation reduced the $T$-count from $7(2k-3)$ to $4(2k-3) + 3$ and $T$-depth from $3(2k-3)$ to $2k-3$. These formulae in fact hold for every $k\geq 3$, a result of the simple structure of the circuits. As the two versions use $4k-8$ and $2k-3$ sequential Toffoli gates, respectively, we note that $T$par parallelizes each Toffoli to $T$-depth 1 when sufficiently many ancillae are available.  Moreover, the reductions in $T$-count can be observed to correspond directly to each shared target or control -- in this way, $T$par achieves the same $T$ count and depth reductions as the multiply-controlled gate construction reported in \cite{s13}, but applies to more general cases and does not require implementing controls with the less intuitive $\Lambda_2(\pm iX)$ gates.

\begin{figure}[h]
\centering
\includegraphics[scale=0.2]{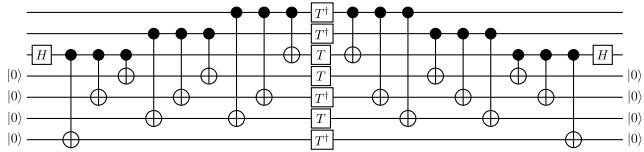}
\caption{$T$-depth 1 implementation of the Toffoli gate.}
\label{fig:tdepth1tof}
\end{figure}

\begin{figure}[h]
\centering
\includegraphics[scale=0.2]{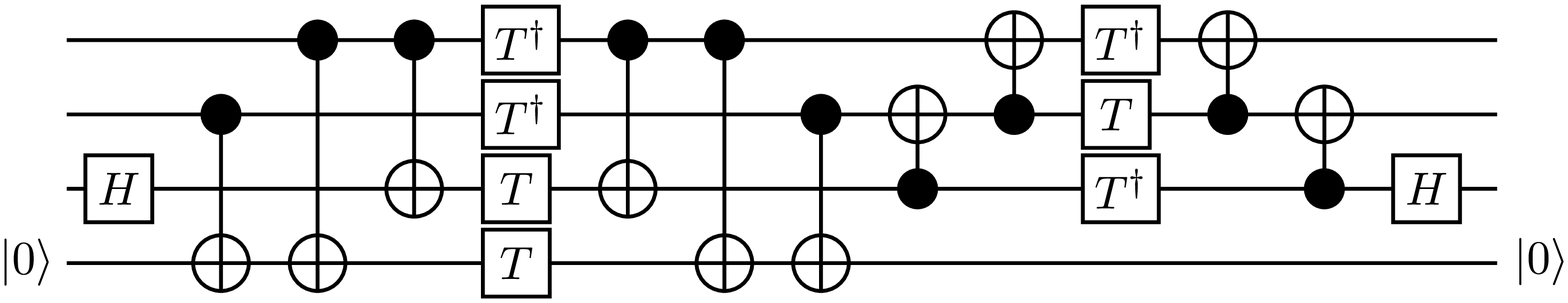}
\caption{$T$-depth 2 implementation of the Toffoli gate.}
\label{fig:tdepth2tof}
\end{figure}

\begin{figure}[h]
\centering
\includegraphics[scale=0.2]{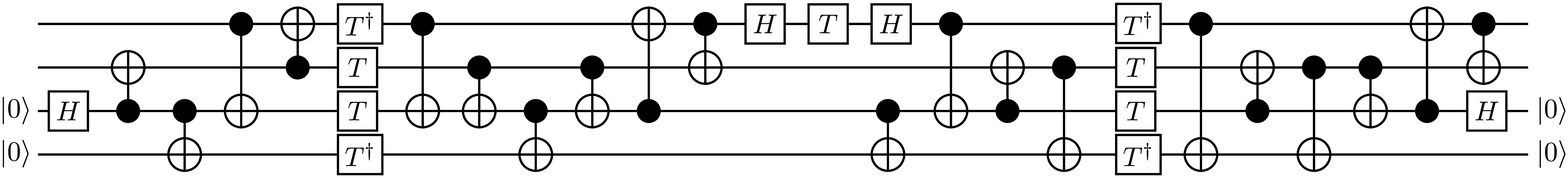}
\caption{$T$-depth 3 implementation of the controlled-$T$ gate (CNOT stages optimized by templates \cite{mdmn08}).}
\label{fig:tdepth3controlledt}
\end{figure}

\begin{figure}[h]
\centering
\includegraphics[scale=0.2]{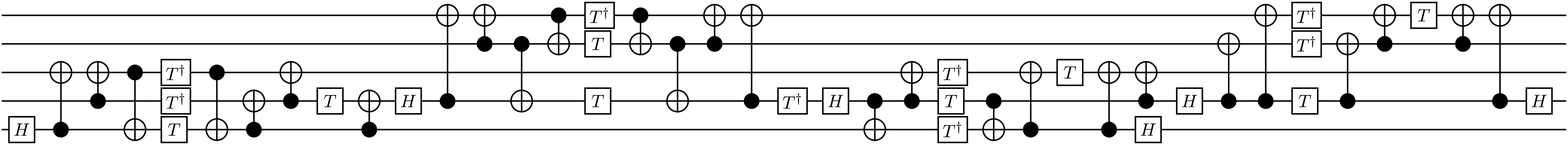}
\caption{An optimized implementation of the $\Lambda_3(X)$ gate (CNOT stages optimized by templates \cite{mdmn08}).}
\label{fig:tof4-tpar}
\end{figure}

As a final remark, we note that our algorithm reproduces many of the previous results regarding the optimization of $T$-depth. In particular, Figure~\ref{fig:tdepth1tof} shows the circuit produced by running $T$par on an implementation of the Toffoli gate. The circuit mirrors the $T$-depth 1 Toffoli reported in \cite{s13}.  Moreover, the full range of $T$-depths possible with different numbers of ancillae can be observed, as seen in Figure~\ref{fig:tdepth2tof}.  We also show a re-synthesized controlled-$T$ gate \cite{ammr13} using one ancilla to reduce the $T$-depth from 5 to 3 (Figure~\ref{fig:tdepth3controlledt}), and a re-synthesized Barenco {\em et al.} implementation of the $\Lambda_3(X)$ gate using no ancillae (Figure~\ref{fig:tof4-tpar}).

\section{Conclusion}
\label{sec:con}

We have described an algorithm for re-synthesizing Clifford + $T$ circuits with reduced $T$-count and depth. The algorithm uses a circuit representation based on linear Boolean functions, allowing $T$ gates to be combined and then parallelized through the use of matroid partitioning algorithms. The algorithm has worst case runtime that is cubic in the number of $T$ gates, qubits, and Hadamard gates, though our experiments show that the algorithm is sufficiently fast for practical circuit sizes.

Our benchmarks (Tables \ref{tab:benchmarks1} and \ref{tab:benchmarks2}) show that large gains can be obtained in reducing the $T$-count and $T$-depth of quantum circuits. In some cases, $T$-count was reduced by as much as $65.7\%$, while the $T$-depth could be reduced by up to $87.6\%$ without ancillae. Furthermore, the benchmarks illustrate that ancillae can be used to parallelize $T$ gates further, and given the runtimes reported the algorithm can be seen to provide substantial flexibility in exploring the trade-off between ancilla usage and $T$-depth. In the most extreme case we were able to reduce the $T$-depth of GF($2^{m}$)-Mult circuits from $12(m-1)$ to a constant of $2$,  using unbounded ancillae.  While the benchmarks were all arithmetic or otherwise reversible operations, such operations typically require the majority of the resources in circuits for quantum algorithms of interest \cite{IARPAQCS}.

We close by noting that as a consequence of the $T$par algorithm, reducing the number of terms in the mixed arithmetic polynomials describing the phase corresponds directly to reducing the $T$-complexity of quantum circuits; in fact, it was observed that minimization of $T$-count in $\{CNOT, T\}$ circuits is equivalent to minimizing the number of odd coefficients in the phase polynomial. A natural avenue of future work is then to develop methods for optimizing such polynomials for $T$-count and depth. This work also represents the first instance, to the authors' knowledge, of the use of sum over paths style representations in quantum circuit synthesis and optimization. While this representation has proven effective in optimizing circuit $T$-count and $T$-depth, the questions of synthesis for more general phases, e.g. the sum over paths representation of $\{H, CNOT, T\}$, and of the precise form of phases synthesizeable over the ``Clifford + $T$" gate set remain. Moreover, we leave it as a topic of future research to find new applications to optimization over different gate sets and cost metrics. Efficient practical synthesis of linear reversible circuits is another important direction that would directly contribute to improving the results of this work. 



\appendix

\section{Parallelizing $(\Lambda_2(X)\otimes I)(I\otimes \Lambda_2(X))$}
\label{app:example}

In this section we illustrate the workings of the $T$par algorithm using the following circuit:

\centerline{
\includegraphics[scale=0.3]{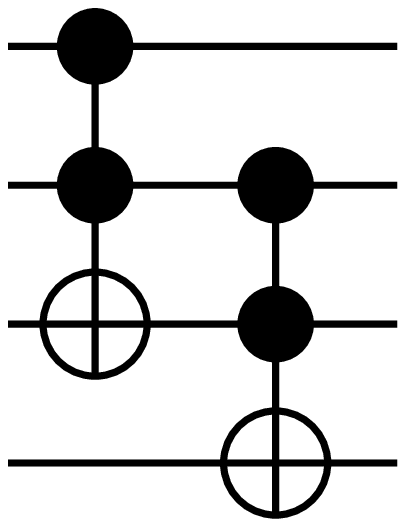}
}
\vspace{1em}

We first expand this circuit by using the $T$-depth 3 Toffoli gate implementation \cite{ammr13}:

\vspace{.5em}
\centerline{\includegraphics[scale=0.2]{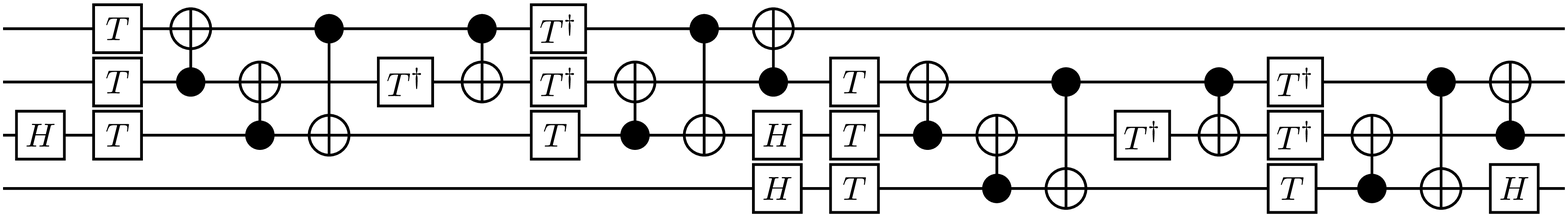}}
\vspace{.5em}

Next we compute $\langle S, Q, H\rangle$ by applying the function $\llbracket\cdot\rrbracket$ to each gate in sequence, starting with $\langle \emptyset, \left(x_1, x_2, x_3, x_4)\right), \emptyset\rangle$. The result is $$S = \left\{\text{\parbox{0.65\textwidth}{\centering $x_1, 2\cdot x_2, x_5, 7\!\cdot\!(x_1\oplus x_2), 7\!\cdot\!(x_1\oplus x_5), 7\!\cdot\!(x_2 \oplus x_5), (x_1\oplus x_2\oplus x_5),$ \\ $ x_6, x_7, 7\!\cdot\!(x_2\oplus x_6),7\!\cdot\!(x_2\oplus x_7), 7\!\cdot\!(x_6\oplus x_7),  (x_2\oplus x_6\oplus x_7)$}}\right\},$$ $$Q=(x_1,x_2,x_6,x_8),$$ $$H=(h_1,h_2,h_3,h_4) \text{\quad where}$$

\vspace{-2em} \begin{align*} 
h_1&=\left\{Q_I=(x_1,x_2,x_3,x_4), Q_O=(x_1,x_2,x_5,x_4)\right\}, \\
h_2&=\left\{Q_I=(x_1,x_2,x_5,x_4), Q_O=(x_1,x_2,x_6,x_4)\right\}, \\
h_3&=\left\{Q_I=(x_1,x_2,x_6,x_4), Q_O=(x_1,x_2,x_6,x_7)\right\}, \\
h_4&=\left\{Q_I=(x_1,x_2,x_6,x_7), Q_O=(x_1,x_2,x_6,x_8)\right\}.
\end{align*}

Starting with $h_1$, we see that the terms $x_1, 2\cdot x_2, 7\cdot(x_1\oplus x_2)$ are computable, so we partition them into blocks satisfying $4 - \rk(A) \leq 4 - |A|$, giving $P=\left\{\{x_1, 2\cdot x_2\}, \{7\cdot(x_1\oplus x_2)\}\right\}.$ As neither partition will become uncomputable after $h_1$, we simply apply the first Hadamard gate:

\vspace{1em}
\centerline{\includegraphics[scale=0.2]{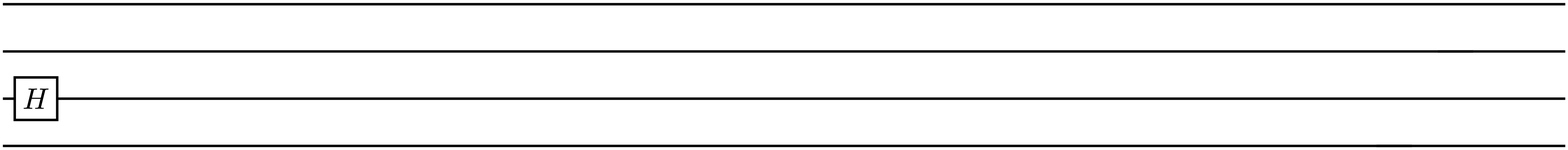}}
\vspace{1em}

At the second Hadamard gate, the path variable $x_5$ is available, so $x_5, 7\cdot(x_1\oplus x_5), 7\cdot(x_2\oplus x_5), x_1\oplus x_2\oplus x_5$ are now computable. We add them to the partition to get $$P=\left\{\{x_1, 2\cdot x_2, 7\cdot(x_2\oplus x_5)\}, \{7\cdot(x_1\oplus x_2)\}\{x_1\oplus x_2\oplus x_5, 7\cdot(x_1\oplus x_5), x_5\}\right\}.$$ We trivially see that $x_2\oplus x_5\notin\spn(\{x_1,x_2,x_6,x_4\})$ and likewise neither is $x_5$, so we synthesize a circuit computing the partitions $\{x_1, 2\cdot x_2, 7\cdot(x_2\oplus x_5)\}$ and $\{x_1\oplus x_2\oplus x_5, 7\cdot(x_1\oplus x_5), x_5\}$ and allow $\{7\cdot(x_1\oplus x_2)\}$ to move past the second Hadamard gate.

\vspace{1em}
\centerline{\includegraphics[scale=0.2]{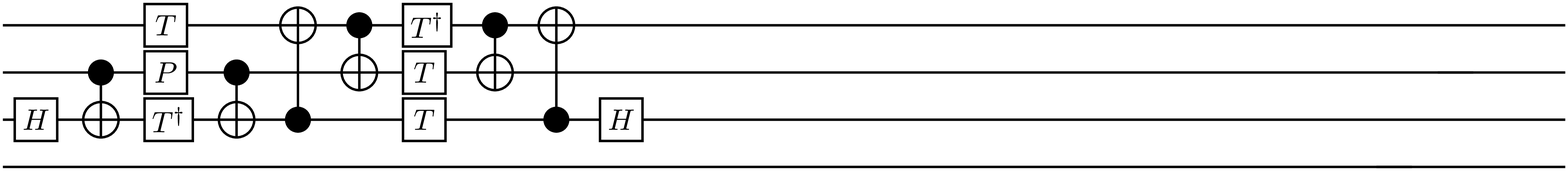}}
\vspace{1em}

Again, $x_6$ is now available, so insert the newly computable terms $x_6, 7\cdot(x_2\oplus x_6)$ into the current partition $P=\left\{\{7\cdot(x_1\oplus x_2)\}\right\}$ to get $P=\left\{\{7\cdot(x_1\oplus x_2), x_6, 7\cdot(x_2\oplus x_6)\}\right\}$. As the single partition block will still be computable after applying $h_3$, we apply the next Hadamard gate:

\vspace{1em}
\centerline{\includegraphics[scale=0.2]{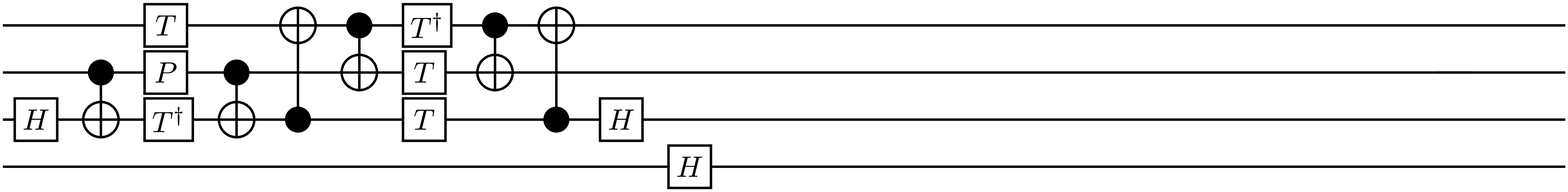}}
\vspace{1em}

We now add the last of the terms to the partition, $x_7, 7\cdot(x_6\oplus x_7), x_2\oplus x_6\oplus x_7$, giving

$$\left\{\{7\cdot(x_1\oplus x_2), x_6, 7\cdot(x_2\oplus x_6), x_7\},\{7\cdot x_2\oplus x_7), 7\cdot(x_6\oplus x_7), x_2\oplus x_6\oplus x_7\}\right\}.$$

As both partitions contain the value $x_7$ which will be destroyed by $h_4$, we apply the remaining partitions, followed by the last Hadamard gate:

\vspace{1em}
\centerline{\includegraphics[scale=0.2]{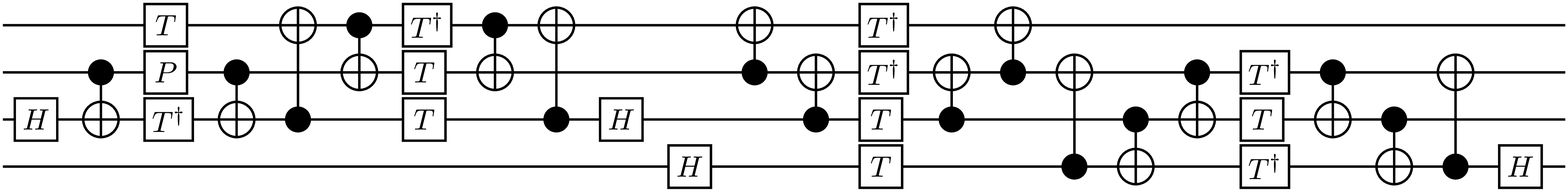}}
\vspace{1em}

The final circuit, shown above, reduces the original circuit by 2 $T$ gates (from 14 to 12) and 2 levels of $T$-depth (from 6 to 4).

Note that the partitions used in the above are minimal, but are not the partitions Algorithm~\ref{alg:partition} actually produces.  Instead, these partitions have been created to best demonstrate the algorithm. Additionally, for simplicity, we described states without parity, as there are no bit flip gates in this example.

\section*{Acknowledgments}
We would like to thank Martin R\"{o}tteler and Bill Cunningham for many helpful discussions.

Supported in part by the Intelligence Advanced Research Projects Activity (IARPA) via Department of Interior National Business Center Contract number DllPC20l66. The U.S. Government is authorized to reproduce and distribute reprints for Governmental purposes notwithstanding any copyright annotation thereon. Disclaimer: The views and conclusions contained herein are those of the authors and should not be interpreted as necessarily representing the official policies or endorsements, either expressed or implied, of IARPA, DoI/NBC or the U.S. Government. 

This material is based upon work partially supported by the National Science Foundation (NSF), during D. Maslov's assignment at the Foundation.
Any opinion, findings, and conclusions or recommendations expressed in this material are those of the author(s) and do not necessarily reflect the views of the National Science Foundation.

Michele Mosca is also supported by Canada's NSERC, MPrime, CIFAR, and CFI. IQC and Perimeter Institute are supported in part by the Government of Canada and the Province of Ontario.

\bibliographystyle{hieeetr}

\end{document}